\pgfplotsset{width=9cm,compat=1.12}
    \definecolor{ocre}{RGB}{0,96,128}
    \definecolor{purp}{RGB}{112,0,112}
    \pgfplotsset{
    	colormap={ocrefade}{
    		rgb255=(150,216,255)
    		rgb255=(0,64,96)
    	}
    }
\newtheorem{definition}{Definition}
\newtheorem{lemma}{Lemma}
\newtheorem{theorem}{Theorem}
\newtheorem{corollary}{Corollary}
\newtheorem{example}{Example} 
\DeclareMathOperator*{\argmin}{argmin}
\newcommand{\tr}{\operatorname{Tr}}
\newcommand{\rank}{\operatorname{rank}}
\newcommand{\bra}[1]{\langle #1 |}
\newcommand{\ket}[1]{| #1 \rangle}
\newcommand{\ketbra}[2]{| #1 \rangle\langle #2 |}
\newcommand{\defeq}{\stackrel{\smash{\textnormal{\tiny def}}}{=}}
\begin{document}
\title{Absolutely $k$-Incoherent Quantum States and \\ Spectral Inequalities for Factor Width of a Matrix}

\author{
	Nathaniel Johnston,\textsuperscript{\!\!1,2} \ Shirin Moein,\textsuperscript{\!\!1,2,3,4} \ Rajesh Pereira,\textsuperscript{\!\!2} \ and Sarah Plosker\textsuperscript{2,3}
}

\maketitle

\begin{abstract}
	We investigate the set of quantum states that can be shown to be $k$-incoherent based only on their eigenvalues (equivalently, we explore which Hermitian matrices can be shown to have small factor width based only on their eigenvalues). In analogy with the absolute separability problem in quantum resource theory, we call these states ``absolutely $k$-incoherent'', and we derive several necessary and sufficient conditions for membership in this set. We obtain many of our results by making use of recent results concerning hyperbolicity cones associated with elementary symmetric polynomials.\\
	
	\noindent \textbf{Keywords:} spectral inequalities, quantum coherence, factor width, hyperbolicity cones, elementary symmetric polynomials \\
	
	\noindent \textbf{MSC2010 Classification:} 81P40, 15A18, 15B57

\end{abstract}

\addtocounter{footnote}{1}
\footnotetext{Department of Mathematics \& Computer Science, Mount Allison University, Sackville, NB, Canada E4L 1E4}\addtocounter{footnote}{1}
\footnotetext{Department of Mathematics \& Statistics, University of Guelph, Guelph, ON, Canada N1G 2W1} \addtocounter{footnote}{1}
\footnotetext{Department of Mathematics \& Computer Science, Brandon University, Brandon,
    MB, Canada R7A 6A9}\addtocounter{footnote}{1}
\footnotetext{Department of Mathematical Sciences, Isfahan University of Technology, Isfahan, Iran 84156-83111}

\section{Introduction}

In the theory of quantum information, there are numerous resource theories that can be used to make rigorous the idea of certain quantum states being ``useful'' or ``useless'' \cite{CG18}. Perhaps the most well-known of these resource theories is that of entanglement and separability \cite{Vid00}, where separable states are those that are ``useless'', while entangled (i.e., non-separable) states are those that are ``useful'' (in terms of quantum channel discrimination, for example \cite{PW09}).

The well-studied \emph{absolute} separability problem asks for a characterization of which quantum states can be determined to be separable based only on their spectrum (i.e., multiset of eigenvalues) \cite{KZ00}. A full solution to this problem remains out of reach \cite{AJR15}, but a complete characterization is known in small dimensions \cite{VAD01,Joh13} and some non-trivial necessary conditions are known in all dimensions \cite{Hil07}. This ``absolute'' question has been asked for some other resource theories and in some other contexts as well. For example, it has been explored in the resource theory of \emph{symmetric} separability \cite{CJMP21,SM21}, and it has been explored in the context of the reduction map from quantum information theory \cite{JLN15}.

In this work, we introduce the corresponding ``absolute'' question for another resource theory: the resource theory of $k$-coherence \cite{RBC18}, which has been studied by several groups in many different contexts \cite{4,9,10,11}. Understanding the structure of $k$-coherence would be useful in many fields, as it helps explain the function of complex biological molecules like those found in light-harvesting \cite{5} and chemical and biophysical systems \cite{6,7}, for example, and it can be used to describe the statistical properties of a quantum state's interference pattern \cite{8}.

\subsection{Measuring k-Incoherence in a Laboratory Setting}

Despite widespread interest in quantifying coherence, there is a lack of efficient methods for measuring coherence in experiments, which can make it difficult to use in practice. One of the standard ways of using measurement results to show that a quantum state $\rho$ is $k$-coherent is to measure it against a $k$-coherence witness $W$: if $\tr(W\rho) < 0$ then $\rho$ must be $k$-coherent. In fact, this method can be used to obtain lower bounds on $\rho$'s robustness of $k$-coherence (see \cite{NBCPJA16,WSR21}, for example).

In the other direction, however, it seems to be much more difficult to use measurement results to show that a quantum state is $k$-incoherent. This problem was partially overcome in \cite{YG19} by developing a method of estimating the $k$-coherence in a quantum state via its spectrum. Our work on the ``absolute'' version of $k$-coherence can be seen as an extension of this idea---we ask which quantum states can be shown to be $k$-incoherent (i.e., ``useless'' in this resource theory) based only on knowledge of their spectrum.

More specifically, a quantum state acting on an $n$-dimensional space requires $n^2 - 1$ real parameters to specify, and thus $n^2 - 1$ measurement outcomes to reconstruct via tomography. However, it is often must easier to determine just the state's eigenvalues, rather than the structure of the entire state \cite{EAO02,TOKKNN13, teo2021modern}. After all, the eigenvalues are just $n-1$ real parameters instead of $n^2 - 1$. Our results thus provide tests that can be used to show that quantum states are $k$-incoherent, just given this (easier-to-obtain) restricted information about the state's eigenvalues: if the eigenvalues satisfy any of our sufficient conditions for absolute $k$-incoherence, then the corresponding quantum state must be $k$-incoherent.

In another direction, our results shed light on how much noise must be added to the maximally mixed state in order for it to become $k$-coherent. Indeed, every state that is sufficiently close to the maximally mixed state (which has all eigenvalues equal to each other) is $k$-incoherent, and our sufficient conditions for absolute $k$-incoherence give quantitative statements about how spread out the eigenvalues have to be before $k$-coherence is possible. Phrased another way, our results show that if the (spectral, Frobenius, trace, or any other unitarily-invariant) norm distance between a quantum state and the maximally mixed state is small, then that state must be $k$-incoherent, meaning that $k$-incoherence in this regime is resistant to errors is measurement results. We note that results of this type have been studied extensively in the resource theory of entanglement \cite{GB02}, but to our knowledge no substantial results in this direction were known for $k$-coherence prior to ours.

\subsection{Summary of our Results and Methods}

Much like known results for the absolute separability problem, we obtain a complete characterization of absolutely $k$-incoherent states in small dimensions, as well as non-trivial one-sided (i.e., necessary or sufficient, but not both) conditions in all dimensions. We also obtain a complete necessary and sufficient criterion for the $k = n-1$ case (in all dimensions), showing that membership in this set can be determined in polynomial time. In the terminology of pure mathematics, $k$-incoherent states are exactly the matrices that have factor width at most $k$ \cite{boman2005factor}, and our results can thus equivalently be interpreted as spectral inequalities that can be used to show that a matrix has small factor width.

To arrive at our results, we consider a family of matrices that we call \emph{$k$-locally PSD} (in keeping with the terminology of \cite{blekherman2022hyperbolic}, where these matrices were introduced independently). These matrices form exactly the dual cone of the set of $k$-incoherent quantum states, so they serve as ``witnesses'' in the resource theory of $k$-coherence (in the exact same sense that entanglement witnesses \cite{Ter00} can be used to ``witness'' entanglement). In particular, we show that the spectra of these $k$-locally PSD matrices belong to certain sets called ``hyperbolicity cones'' \cite{R06,Z8}, and we use recent results about the dual of these cones to get bounds on the spectrum of absolutely $k$-incoherent states. We also provide a tight characterization of these cones when $k = n-1$, and recover (in a slightly more explicit way) a result from \cite{blekherman2022hyperbolic}.

Our paper is organized as follows: In Section~\ref{sec:preliminaries}, we review the various definitions and concepts from quantum information theory that we will be exploring, including  $k$-incoherent states and their connection to hermitian matrices with factor width at most $k$, and we introduce the notion of $k$-locally PSD matrices via dual cones. In Section~\ref{sec:factor_positive_spectra} we develop some bounds on the possible spectra of $k$-locally PSD matrices, in Section~\ref{sec:computational_method_factor_pos} we present a method for numerically constructing a $k$-locally PSD matrix with a given spectrum, and then in Section~\ref{sec:hyperbolic_cones} we discuss how our results about $k$-locally PSD matrices relate to recent work on hyperbolicity cones. Readers who are just interested in our results on absolute $k$-incoherence can skip Sections~\ref{sec:computational_method_factor_pos} and~\ref{sec:hyperbolic_cones}, and perhaps all of Section~\ref{sec:factor_positive_spectra}.

In Section~\ref{sec:abs_kcoh}, we present out main results about absolute $k$-incoherence, which include a complete characterization of absolute $2$-incoherence in dimension $3$ (Theorem~\ref{thm:abs_2incoh}), a complete characterization of absolute $(n-1)$-incoherence in every dimension (Theorem~\ref{thm:abs_n_min1_incoherence}), some sufficient conditions for absolute $k$-incoherence for all values of $k$ in all dimensions (Theorems~\ref{thm:dual_ck_abs_coh} and~\ref{thm:k_incoh_max_eig}), as well as a corresponding necessary condition (Theorem~\ref{thm:k_incoh_projection}). Finally, we close in Section~\ref{sec:conclusions} with some open questions.

\section{Mathematical and Quantum Information Theory Preliminaries}\label{sec:preliminaries}

The notation and terminology that we use is fairly standard, so we just introduce it briefly. For a more thorough introduction to quantum information theory, see any of numerous standard textbooks like \cite{NC00,Wat18}.

We use $M_n$ to denote the set of all $n\times n$ matrices with complex entries, $M_n^{\textup{H}}$ to denote the set of Hermitian matrices (i.e., matrices $A \in M_n$ having $A^* = A$, where $A^*$ is the conjugate transpose of $A$), and $M_n^+$ for the subset of them that are (Hermitian) positive semidefinite. We use bold lower case letters such as $\mathbf{v}$ and $\mathbf{w}$ to denote vectors in $\mathbb{C}^n$, and $v_1$, $v_2$, $\ldots$, $v_n$ to denote the entries of $\mathbf{v}\in \mathbb C^n$.

In quantum information theory, a \textbf{mixed quantum state} or \textbf{density matrix} is a positive semidefinite trace-one matrix $\rho \in M_n^{+}$. A \textbf{pure quantum state} is a unit vector in $\mathbb{C}^n$, which we denote using ``bra-ket'' notation: $\ket{v} \in \mathbb{C}^n$ is a unit column vector, while $\bra{v} \defeq \ket{v}^*$ is the corresponding (dual) row vector. Whenever we use lowercase Greek letters like $\rho$, we are assuming that it is a quantum state normalized to have $\tr(\rho) = 1$ (i.e., trace one) and similarly if we use $\ket{v}$ then we are assuming it is a \emph{unit} vector. We denote matrices and vectors that are not necessarily normalized like $A \in M_n$ and $\mathbf{v} \in \mathbb{C}^n$, respectively.

\subsection{Incoherence and Factor Width}\label{sec:incoh_factor_width}

By the spectral decomposition, every mixed state $\rho \in M_n^{+}$ can be written in the form
\[
    \rho = \sum_j \mathbf{v_j}\mathbf{v}_{\mathbf{j}}^*
\]
for some $\{\mathbf{v_j}\} \subseteq \mathbb{C}^n$. In the special case when each $\mathbf{v_j}$ can be chosen to have at most $k$ non-zero entries (where $1 \leq k \leq n$), $\rho$ is called \textbf{$\mathbf{k}$-incoherent} \cite{RBC18}. If $\rho$ is \emph{not} $k$-incoherent then it is called \textbf{$\mathbf{k}$-coherent}. Density matrices (or more generally, positive semidefinite matrices) that are $k$-incoherent, but not $(k-1)$-incoherent, are sometimes said to have \textbf{factor width} $k$ \cite{boman2005factor}.

We denote the (closed and convex) set of $k$-incoherent quantum states in $M_n^{+}$ by $\mathcal{I}_{k,n}$, and we note that we have the chain of inclusions
\[
    \mathcal{I}_{1,n} \subsetneq \mathcal{I}_{2,n} \subsetneq \cdots \subsetneq \mathcal{I}_{n-1,n} \subsetneq \mathcal{I}_{n,n}.
\]
In the $k = 1$ case, $\mathcal{I}_{1,n}$ is exactly the set of diagonal density matrices, which are simply said to be \textbf{incoherent}. At the other extreme, if $k = n$ then $\mathcal{I}_{n,n}$ is the set of \emph{all} density matrices.

\subsection{Dual Cones and Local Positivity}\label{sec:dual_cones}

Let $\mathcal{C}$ be a subset of a finite-dimensional real inner product space $\mathcal{V}$. Then the dual cone of $\mathcal{C}$, denoted by $\mathcal{C}^\circ$, is defined as follows:
\[
    \mathcal{C}^{\circ} \defeq \big\{ \mathbf{w} \in \mathcal{V} : \langle \mathbf{v},\mathbf{w}\rangle\geq 0 \text{ for all } \mathbf{v} \in \mathcal{C}\big\}.
\]
For example, if $\mathcal{V} = M_n^\textup{H}$ (equipped with the usual Hilbert--Schmidt inner product $\langle X,Y \rangle := \tr(XY)$), the dual cone of a set $\mathcal{C} \subseteq M_n^{\textup{H}}$ is
\[
    \mathcal{C}^{\circ} \defeq \big\{ Y \in M_n^{\textup{H}} : \tr(XY) \geq 0 \text{ for all } X \in \mathcal{C}\big\}.
\]
Similarly, if $\mathcal{V} = \mathbb{R}^n$ (equipped with the usual dot product) then the dual cone of a set $\mathcal{C} \subseteq \mathbb{R}^n$ is
\[
    \mathcal{C}^{\circ} \defeq \big\{ \mathbf{w} \in \mathbb{R}^n : \mathbf{v}\cdot\mathbf{w} \geq 0 \text{ for all } \mathbf{v} \in \mathcal{C}\big\}.
\]
Regardless of $\mathcal{V}$ and the structure of $\mathcal{C}$, every dual cone $\mathcal{C}^\circ$ is (as the name suggests) a cone: if $\mathbf{w} \in \mathcal{C}^\circ$ then $\lambda \mathbf{w} \in \mathcal{C}^{\circ}$ for all $0 \leq \lambda \in \mathbb{R}$. Dual cones are also always closed and convex, and the double-dual of any cone is the closure of its convex hull: $\mathcal{C}^{\circ\circ} = \overline{\mathrm{conv}(\mathcal{C})}$ (see \cite{BV04}, for example).

We will be particularly interested in the dual cone $\mathcal{I}_{k,n}^\circ \subseteq M_n^{\textup{H}}$ of the set of $k$-incoherent density matrices, which we characterize in a few different ways:

\begin{definition}\label{defn:k_fac_pos}
    We say that a Hermitian matrix $X \in M_n^{\textup{H}}$ is \textbf{k-locally PSD} if it has any of the following equivalent properties:
    
    \begin{enumerate}
        \item[a)] $X \in \mathcal{I}_{k,n}^\circ$ (i.e., $\tr(X\rho) \geq 0$ for all $\rho \in \mathcal{I}_{k,n}$);
        
        \item[b)] $\bra{v}X\ket{v} \geq 0$ for all $\ket{v} \in \mathbb{C}^n$ with at most $k$ non-zero entries; or
        
        \item[c)] every $k \times k$ principal submatrix of $X$ is positive semidefinite.
    \end{enumerate}
\end{definition}

The terminology ``$k$-locally PSD'' is from \cite{blekherman2022hyperbolic}, where the eigenstructure of such matrices was studied. We will also explore the eigenstructure of these matrices shortly, in Section~\ref{sec:factor_positive_spectra}.

We now show that these three defining properties really are equivalent. To see that (a) implies (b), we prove the contrapositive: if (b) is false then there exists a complex unit vector $\ket{v}$ with at most $k$ non-zero entries such that $\bra{v}X\ket{v} < 0$. By choosing $\rho = \ketbra{v}{v}$ we then have $\rho \in \mathcal{I}_{k,n}$ and $\tr(X\rho)<0$, showing that (a) is also false. Conversely, to see that (b) implies (a), write $\rho=\mathbf{v_j}\mathbf{v}_{\mathbf{j}}^*$ where each $\mathbf{v_j}$ has at most $k$ non-zero entries. Then
$$\tr(X\rho)=\sum_j \tr\big(X(\mathbf{v_j}\mathbf{v}_{\mathbf{j}}^*)\big)=\sum_j \mathbf{v}_{\mathbf{j}}^*X\mathbf{v_j} \geq 0,$$ 
where we used (b) to see that each in the sum is non-negative.

Finally, to see that (b) is equivalent to (c), we note that requiring $\bra{v}X\ket{v} \geq 0$ for all $\ket{v} \in \mathbb{C}^n$ with $k$ non-zero entries in positions $i_1$, $i_2$, $\ldots$, $i_k$ is equivalent to requiring that the $k \times k$ principal submatrix of $X$ corresponding to rows and columns $i_1$, $i_2$, $\ldots$, $i_k$ is positive semidefinite. Letting the tuple $(i_1,i_2,\ldots,i_k)$ range over all possible $k$-tuples of indices gives the equivalence of (b) and (c).

Property~(a) of Definition~\ref{defn:k_fac_pos} justifies thinking of these $k$-locally PSD matrices as ``$k$-coherence witnesses'' (in direct analogy with entanglement witnesses \cite{Ter00}). Indeed, if $X$ is $k$-locally PSD and $\tr(X\rho) < 0$ then we know that $\rho$ must be $k$-coherent. Since the quantity $\tr(X\rho)$ is measurable in a lab, it lets us use $X$ to ``witness'' the $k$-coherence of $\rho$.

It is straightforward to see that these sets of $k$-locally PSD matrices satisfy the chain of inclusions
\[
    M_n^{+} = \mathcal{I}_{n,n}^\circ \subsetneq \mathcal{I}_{n-1,n}^\circ \subsetneq \cdots \subsetneq \mathcal{I}_{2,n}^\circ \subsetneq \mathcal{I}_{1,n}^\circ,
\]
where $\mathcal{I}_{1,n}^\circ$ is the set of Hermitian matrices with non-negative diagonal entries.

\section{Spectra of $k$-locally PSD Matrices}\label{sec:factor_positive_spectra}

In this section, we investigate some bounds on the possible spectra of $k$-locally PSD matrices. We start with a result that tells us exactly how many negative eigenvalues these matrices can have:

\begin{theorem}\label{thm:neg_eigs_of_kfacpos}
    Every matrix $X \in \mathcal{I}_{k,n}^\circ$ has at most $n-k$ negative eigenvalues. Furthermore, this bound is tight: for all $1 \leq k \leq n$ there exists $X \in \mathcal{I}_{k,n}^\circ$ with exactly $n-k$ negative eigenvalues.
\end{theorem}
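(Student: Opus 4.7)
The plan is to prove the two halves of the statement in opposite order: first the upper bound via a dimension count against the characterization in Definition~\ref{defn:k_fac_pos}(c), and then tightness via an explicit construction of the form $X = A - I_n$.

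For the upper bound, suppose $X \in \mathcal{I}_{k,n}^\circ$ and let $V_- \subseteq \mathbb{C}^n$ be the span of the eigenvectors of $X$ associated with its negative eigenvalues, so $\dim V_-$ equals the number of negative eigenvalues counted with multiplicity. Assume toward a contradiction that $\dim V_- \geq n-k+1$. For any $k$-element subset $S \subseteq \{1,\ldots,n\}$, the coordinate subspace $W_S \defeq \mathrm{span}\{\mathbf{e}_i : i \in S\}$ has dimension $k$, so $\dim V_- + \dim W_S \geq n+1$ forces $V_- \cap W_S$ to contain a nonzero vector $\mathbf{v}$. Then $\mathbf{v}^* X \mathbf{v} < 0$ because $\mathbf{v}$ is a nonzero element of the negative eigenspace; but $\mathbf{v}$ is supported on $S$, so the same quadratic form equals $\mathbf{v}_S^* X[S] \mathbf{v}_S$, which is $\geq 0$ since $X[S] \succeq 0$ by Definition~\ref{defn:k_fac_pos}(c), a contradiction.

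For tightness, I would take $X = A - I_n$, where $A \in M_n^+$ has rank exactly $k$ and every $k \times k$ principal submatrix satisfies $A[S] \succeq I_k$. Such an $X$ has $-1$ as an eigenvalue on $\ker(A)$ (dimension $n-k$) and non-negative eigenvalues on $\mathrm{range}(A)$ (dimension $k$), giving exactly $n-k$ negative eigenvalues; and $X[S] = A[S] - I_k \succeq 0$ places $X$ in $\mathcal{I}_{k,n}^\circ$ via Definition~\ref{defn:k_fac_pos}(c). To construct $A$, choose an $n \times k$ matrix $B$ whose every $k$-row submatrix $B_S$ is invertible---a Vandermonde matrix $B_{ij} = x_i^{j-1}$ with distinct reals $x_1,\ldots,x_n$ suffices, since all $k \times k$ Vandermonde minors are nonzero---set $\gamma \defeq \min_S \sigma_{\min}(B_S) > 0$ (a minimum over finitely many positive numbers), and take $A = \gamma^{-2} BB^*$. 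Then $A[S] = \gamma^{-2} B_S B_S^* \succeq I_k$ by the definition of $\gamma$, while the comparison $B^*B - B_S^* B_S = \sum_{i \notin S} \mathbf{b}_i \mathbf{b}_i^* \succeq 0$ (where $\mathbf{b}_i \in \mathbb{C}^k$ is the $i$-th row of $B$) yields $\sigma_{\min}(B) \geq \gamma$, so the $k$ nonzero eigenvalues of $A$ are $\geq 1$ and hence the $k$ non-$(-1)$ eigenvalues of $X$ are $\geq 0$.

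The main obstacle is fairly minor: it is to ensure that the construction produces exactly---not merely at least---$n-k$ negative eigenvalues, which the scaling by $\gamma^{-2}$ together with the submatrix comparison above secures. The upper bound direction reduces, after invoking Definition~\ref{defn:k_fac_pos}(c), to the clean one-line dimension count outlined above.
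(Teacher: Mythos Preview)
Your proof is correct. Both halves take a slightly different route from the paper's argument.

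For the upper bound, the paper invokes Cauchy interlacing: if $\mu_1 \geq \cdots \geq \mu_k$ are the eigenvalues of any $k \times k$ principal submatrix, then $\lambda_k \geq \mu_k \geq 0$. Your dimension-count argument is essentially a direct proof of exactly this consequence of interlacing, so the difference is cosmetic---yours is self-contained while the paper outsources to a citation.

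For tightness, the constructions are genuinely different, though both lean on Vandermonde matrices. The paper builds $X = I - cP_{n-k}$, where $P_{n-k}$ is the orthogonal projection onto the span of the first $n-k$ columns of an $n \times n$ Vandermonde matrix; the relevant Vandermonde fact is that this subspace contains no nonzero $k$-sparse vectors, and a compactness argument over the unit sphere of $k$-sparse vectors then produces a workable $c > 1$. Your construction $X = \gamma^{-2}BB^* - I$ is in a sense dual: you take the first $k$ Vandermonde columns as $B$ and use instead the fact that every $k \times k$ row-submatrix $B_S$ is invertible. Your route is more explicit---it names a concrete scaling constant $\gamma = \min_S \sigma_{\min}(B_S)$ rather than invoking compactness---at the modest cost of the extra verification that the $k$ nonzero eigenvalues of $A$ are at least $1$, which you dispatch cleanly via $B^*B \succeq B_S^*B_S$.
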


\begin{proof}
    Suppose $X \in \mathcal{I}_{k,n}^\circ$ has eigenvalues $\lambda_1 \geq \lambda_2 \geq \cdots \geq \lambda_n$. To see that it has at most $n-k$ negative eigenvalues (i.e., $\lambda_k \geq 0$), recall that eigenvalue interlacing (see \cite[Theorem 1.1]{LPN12}, for example) tells us if $\mu_1 \geq \mu_2 \geq \cdots \geq \mu_k$ are the eigenvalues of a $k \times k$ principal submatrix of $X$ then
    \begin{align}\label{eq:interlacing}
        \lambda_j\geq \mu_j \geq \lambda_{n-k+j} \quad \text{for all} \quad 1 \leq j \leq k.
    \end{align}
    Since each $k \times k$ submatrix of $X$ is positive semidefinite, we know that $\mu_k \geq 0$, so taking $j = k$ in Inequality~\eqref{eq:interlacing} gives $\lambda_k \geq \mu_k \geq 0$, as desired.
    
    To see that there exists $X \in \mathcal{I}_{k,n}^\circ$ with exactly $n-k$ negative eigenvalues, recall from \cite[Lemmas~8 and~9]{CW08} that if $0 < v_1 < v_2 < \cdots < v_n \in \mathbb{R}$ and
    \begin{align}\label{eq:vandermonde}
        V & := \begin{bmatrix}
            1 & v_1 & v_1^2 & \cdots & v_1^{n-1} \\
            1 & v_2 & v_2^2 & \cdots & v_2^{n-1} \\
            \vdots & \vdots & \vdots & \ddots & \vdots \\
            1 & v_n & v_n^2 & \cdots & v_n^{n-1} \\
        \end{bmatrix}
    \end{align}
    is a Vandermonde matrix, then any non-zero linear combination of $n-k$ of $V$'s columns has at most $n-k-1$ entries equal to $0$ (i.e., at least $k+1$ non-zero entries). With this in mind, we let $\mathbf{v_j}$ denote the $j$-th column of $V$ and then define the subspace
    \[
        \mathcal{S}_{n-k} := \mathrm{span}\{ \mathbf{v_1}, \mathbf{v_2}, \ldots, \mathbf{v_{n-k}} \},
    \]
    which only contains vectors with at least $k+1$ non-zero entries.
    
    If $P_{n-k}$ is the orthogonal projection onto $\mathcal{S}_{n-k}$ then it is clear that $X := I - cP_{n-k}$ has exactly $n-k$ negative eigenvalues whenever $c > 1$ (since the $\mathbf{v_j}$'s are linearly independent, so $\mathrm{rank}(P_{n-k}) = n-k$). We furthermore claim that $X$ is $k$-locally PSD for some $c > 1$. To see this, notice that if $\ket{v} \in \mathbb{C}^n$ has at most $k$ non-zero entries then $\bra{v}P_{n-k}\ket{v} < 1$. Compactness of the set of all such $\ket{v}$ implies that there exists a scalar $d < 1$ such that $\bra{v}P_{n-k}\ket{v} \leq d$ for all such $\ket{v}$, so $\bra{v}X\ket{v} = 1 - c\bra{v}P_{n-k}\ket{v} \geq 1 - cd$, which is non-negative as long as $c \leq 1/d$. Since $1/d > 1$, we can choose $c = 1/d$ to complete the proof.
\end{proof}

For example, if $k = n-1$ then the construction given in the proof of Theorem~\ref{thm:neg_eigs_of_kfacpos} says that there exists $c > 1$ such that $X := I - cP_{1} = I - c(\mathbf{1}\mathbf{1}^T/n)$ is $(n-1)$-locally PSD with $1$ negative eigenvalue, where $\mathbf{1} \defeq (1,1,\ldots,1)^T$ is the all-ones vector. In this case, we can explicitly compute $c = n/(n-1)$ as the largest possible value of $c$ that results in $X$ being $(n-1)$-locally PSD.


We now investigate the question of \emph{how} negative the negative eigenvalues of a $k$-locally PSD matrix can be. Our main result in this direction is a simple bound that is in terms of the \textbf{elementary symmetric polynomials}
\begin{align}\label{eq:elem_sym_poly}
    S_k(\lambda_1,\lambda_2,\ldots,\lambda_n) \defeq \sum_{1 \leq i_1 < \cdots < i_k \leq n} \left(\prod_{j=1}^k \lambda_{i_j}\right).
\end{align}

\begin{theorem}\label{thm:sym_poly}
    If $X \in \mathcal{I}_{k,n}^\circ$ has eigenvalues $\lambda_1$, $\lambda_2$, $\ldots$, $\lambda_n$ then
    \begin{align}\label{eq:sym_poly_eig}
        S_j(\lambda_1,\lambda_2,\ldots,\lambda_n) \geq 0 \quad \text{for all} \quad 1 \leq j \leq k.
    \end{align}
    Conversely, if $k \in \{1,n-1,n\}$ then, for any scalars $\lambda_1$, $\lambda_2$, $\ldots$, $\lambda_n \in \mathbb{R}$ satisfying Inequality~\eqref{eq:sym_poly_eig}, there exists $X \in \mathcal{I}_{k,n}^\circ$ with eigenvalues $\lambda_1$, $\lambda_2$, $\ldots$, $\lambda_n$.
\end{theorem}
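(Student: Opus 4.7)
The plan is to handle the forward implication in one stroke for all $k$, and then address the three special values of $k$ in the converse separately, with $k=n-1$ as the main obstacle.

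For the forward direction, I would use property (c) of Definition~\ref{defn:k_fac_pos}: every $k\times k$ principal submatrix of $X$ is PSD, so every $j\times j$ principal submatrix with $j\leq k$ is also PSD (being a principal submatrix of a PSD matrix). Then the identity
\[
    S_j(\lambda_1,\ldots,\lambda_n) \;=\; \sum_{|I|=j}\det(X[I]),
\]
where $X[I]$ denotes the principal submatrix indexed by $I$, expresses $S_j$ as a sum of determinants of PSD matrices, each of which is non-negative.

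For the converse with $k=n$, the set $\mathcal{I}_{n,n}^\circ$ is just $M_n^+$, and the hypothesis that $S_j\geq 0$ for every $j$ forces the characteristic polynomial $p(t)=\sum_{j=0}^n (-1)^jS_j t^{n-j}$ to have $(-1)^n p(-s)>0$ for $s>0$, so no eigenvalue is negative and $X=\mathrm{diag}(\lambda_1,\ldots,\lambda_n)$ works. For $k=1$, $\mathcal{I}_{1,n}^\circ$ is the set of Hermitian matrices with non-negative diagonal; since $\bar\lambda := S_1/n \geq 0$ and the constant vector $(\bar\lambda,\ldots,\bar\lambda)$ is majorized by $(\lambda_1,\ldots,\lambda_n)$, the Schur--Horn theorem produces a Hermitian $X$ with those eigenvalues and constant diagonal $\bar\lambda$.

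The hard case is $k=n-1$. The first step is a sign-analysis lemma: the inequalities $S_1,\ldots,S_{n-1}\geq 0$ force at most one $\lambda_i$ to be negative (by a Newton-inequality-style argument, since two or more negative $\lambda_i$'s and one dominating positive one can be shown to violate some $S_j$). Order the eigenvalues as $\lambda_1\geq\cdots\geq\lambda_{n-1}\geq 0\geq\lambda_n$; if $\lambda_n\geq 0$ take $X$ diagonal. Otherwise, by Cauchy interlacing and the non-negativity of $\lambda_1,\ldots,\lambda_{n-1}$, each $(n-1)\times(n-1)$ principal submatrix $X_{[i]}$ of any Hermitian $X$ with this spectrum is PSD if and only if $\det(X_{[i]})\geq 0$. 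I would seek $X$ in the rank-one-corrected form $X=D-\mathbf{v}\mathbf{v}^T$ with $D=\mathrm{diag}(d_1,\ldots,d_n)\succ 0$ and $\mathbf{v}\in\mathbb{R}^n$; the secular equation $\sum_j v_j^2/(d_j-t)=1$ then parametrizes the spectrum, while the PSD condition on $X_{[i]}$ reduces to $\sum_{j\neq i} v_j^2/d_j\leq 1$ for every $i$.

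The main obstacle is coupling these two conditions: choosing $(D,\mathbf{v})$ so that the secular equation hits the target eigenvalues $(\lambda_1,\ldots,\lambda_n)$ \emph{while} all $n$ scalar inequalities $\sum_{j\neq i} v_j^2/d_j\leq 1$ hold, using only $S_{n-1}(\lambda)\geq 0$ as leverage. A fallback strategy, if the direct construction is too rigid, is a continuity argument: start from the spectrum of the explicit $(n-1)$-locally PSD matrix $I-\tfrac{n}{n-1}\mathbf{1}\mathbf{1}^T/n$ constructed after Theorem~\ref{thm:neg_eigs_of_kfacpos}, and deform along a path in the set $\{\lambda : S_j(\lambda)\geq 0,\ 1\leq j\leq n-1\}$ to the target spectrum; since this set is a (closed) hyperbolicity cone with connected interior, and $(n-1)$-local PSDness is preserved under small perturbations of the matrix, a lifting-of-paths argument combined with compactness should close the gap.
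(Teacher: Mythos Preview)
Your forward implication and the converse cases $k=1$ and $k=n$ match the paper's argument essentially verbatim.

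For $k=n-1$, however, your proposal does not constitute a proof. You correctly reduce the problem (modulo a minor edge case when some $\lambda_j=0$) to finding $(D,\mathbf{v})$ so that $X=D-\mathbf{v}\mathbf{v}^T$ has the target spectrum and satisfies $\sum_{j\neq i} v_j^2/d_j\leq 1$ for every $i$, but you then label this ``the main obstacle'' and leave it unresolved. The fallback continuity argument does not close the gap either: knowing that $\mathcal{I}_{n-1,n}^\circ$ is open in $M_n^{\textup{H}}$ does \emph{not} let you lift a path in spectrum space to a path of matrices inside $\mathcal{I}_{n-1,n}^\circ$. That would require the spectrum map restricted to $\mathcal{I}_{n-1,n}^\circ$ to be open onto $C_{n-1}$, which is precisely the surjectivity statement you are trying to prove; compactness alone cannot supply it.

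The paper bypasses all of this with a one-line construction via circulant matrices. The key external input is Lemma~\ref{circlemma} (due to Kushel): if $C\in M_n$ is circulant with monic characteristic polynomial $p(z)$, then every $(n-1)\times(n-1)$ principal submatrix of $C$ has characteristic polynomial $\tfrac{1}{n}p'(z)$. Take $X$ to be the Hermitian circulant with spectrum $(\lambda_1,\ldots,\lambda_n)$, so $p(z)=\prod_j(z-\lambda_j)$ and
\[
    p'(z)=nz^{n-1}+\sum_{j=1}^{n-1}(-1)^{j}(n-j)\,S_j(\lambda)\,z^{n-j-1}.
\]
Because $S_1,\ldots,S_{n-1}\geq 0$, the coefficients of $p'$ alternate in sign, so $p'$ has no negative real roots; hence every $(n-1)\times(n-1)$ principal submatrix of $X$ is positive semidefinite and $X\in\mathcal{I}_{n-1,n}^\circ$. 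No case analysis on the sign pattern of the $\lambda_i$ and no constrained secular equation are needed. (Your preliminary claim that at most one $\lambda_i$ can be negative is in fact correct---it follows from exactly this alternating-sign observation for $p'$ together with interlacing of the roots of $p$ and $p'$---but the paper's argument never uses it.)
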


Before we can prove this theorem, we need the following lemma:

\begin{lemma} \label{circlemma} \cite[Theorem 2.1]{kushel2016circulants} Let $C$ be an $n\times n$ circulant matrix with the monic characteristic polynomial $p(z)$. Then every $(n-1)\times(n-1)$ principal submatrix of $C$ has characteristic polynomial $\frac{1}{n}p^{\prime}(z)$.
\end{lemma}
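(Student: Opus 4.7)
The plan is to combine two ingredients: a general identity expressing $p'(z)$ as a sum of characteristic polynomials of the $(n-1)\times(n-1)$ principal submatrices of $C$, together with the cyclic symmetry of circulant matrices, which forces those characteristic polynomials all to coincide.

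First I would exploit the defining property of circulants. Let $S$ be the cyclic shift permutation matrix (sending $e_i \mapsto e_{i+1 \bmod n}$); a matrix $C$ is circulant precisely when $S^{-1}CS = C$, so conjugation by $S$ is a similarity that fixes $C$. Because permutation conjugation simultaneously permutes rows and columns in the same way, it carries the principal submatrix $C_{(i)}$ (obtained by deleting row and column $i$) to $C_{(i+1 \bmod n)}$. Iterating, the $n$ principal submatrices $C_{(1)}, C_{(2)}, \ldots, C_{(n)}$ are pairwise similar and therefore share a common characteristic polynomial, which I will call $q(z)$.

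Second I would invoke the standard identity
$$p'(z) \;=\; \sum_{i=1}^{n} \det\!\bigl(zI_{n-1} - C_{(i)}\bigr).$$
This follows from Jacobi's formula $\frac{d}{dz}\det(zI-C) = \tr\bigl(\op{adj}(zI-C)\bigr)$ applied to $\frac{d}{dz}(zI-C) = I$, together with the fact that the $(i,i)$ diagonal entry of $\op{adj}(zI-C)$ is by definition the $(i,i)$ cofactor $\det(zI_{n-1} - C_{(i)})$. Alternatively, one can derive it by differentiating the Leibniz/cofactor expansion of $\det(zI-C)$ term-by-term. Combining the two observations, every summand on the right-hand side equals $q(z)$, so $p'(z) = n\,q(z)$, which rearranges to $q(z) = \tfrac{1}{n} p'(z)$, as claimed.

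I do not anticipate a major obstacle: the one small subtlety is verifying the bookkeeping in the first step, namely that the similarity by $S$ really does send $C_{(i)}$ to $C_{(i+1 \bmod n)}$, which is a direct computation from how $S$ acts on standard basis vectors. Everything else is a one-line invocation of Jacobi's formula, so the argument is short and essentially algebraic.
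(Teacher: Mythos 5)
Your argument is correct and complete. Both of its ingredients check out: conjugation by the cyclic shift $S$ fixes a circulant $C$ and carries the principal submatrix $C_{(i)}$ to a matrix permutationally similar to $C_{(i+1 \bmod n)}$, so all $n$ of the $(n-1)\times(n-1)$ principal submatrices share one characteristic polynomial $q(z)$; and the identity $p'(z)=\sum_{i=1}^n\det\bigl(zI_{n-1}-C_{(i)}\bigr)$ is the standard consequence of Jacobi's formula (one can also verify it coefficientwise: the coefficient of $z^{n-1-k}$ on each side is $(-1)^k(n-k)$ times the sum of the $k\times k$ principal minors of $C$). Together these give $p'(z)=n\,q(z)$ for \emph{any} matrix all of whose $(n-1)\times(n-1)$ principal submatrices are mutually similar, of which circulants are a special case. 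The paper does not prove this lemma itself; it cites \cite[Theorem~2.1]{kushel2016circulants} for the upper-left submatrix and then adds exactly your first observation (permutational similarity of all the principal submatrices) to extend the statement to every principal submatrix. Your proof therefore subsumes the paper's remark and additionally supplies, via the trace-of-adjugate identity, the content that the paper delegates to the external reference, yielding a short self-contained derivation.
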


We note that, while the original statement of this result given in \cite{kushel2016circulants} is in terms of the $(n-1)\times(n-1)$ \emph{upper left} principal submatrix of $C$, the result and proof work for all $(n-1)\times(n-1)$ principal submatrices of $C$.  In fact, any two $(n-1)\times(n-1)$ principal submatrices of $C$ can be seen to be permutationally similar and so have the same characteristic polynomial.

\begin{proof}[Proof of Theorem~\ref{thm:sym_poly}]
    To see that Inequality~\eqref{eq:sym_poly_eig} holds when $\lambda_1$, $\lambda_2$, $\ldots$, $\lambda_n$ are eigenvalues of $X \in \mathcal{I}_{k,n}^\circ$, recall from \cite[Theorem~1.2.16]{HJ13} that $S_j(\lambda_1,\lambda_2,\ldots,\lambda_n)$ equals the sum of the $j \times j$ principal minors of $X$. Since $X$ is $k$-locally PSD, those principal minors are all non-negative whenever $1 \leq j \leq k$, so $S_j(\lambda_1,\lambda_2,\ldots,\lambda_n) \geq 0$ whenever $1 \leq j \leq k$ as well.
    
    For the ``conversely'' direction, we first consider the case when $k = 1$. If $\lambda_1$, $\lambda_2$, $\ldots$, $\lambda_n$ are such that $\lambda_1 + \lambda_2 + \cdots + \lambda_n \geq 0$ (i.e., Inequality~\eqref{eq:sym_poly_eig} holds), then the Schur--Horn theorem \cite[Theorem~4.3.48]{HJ13} tells us that there exists $X \in M_n^{\textup{H}}$ with eigenvalues $\lambda_1$, $\lambda_2$, $\ldots$, $\lambda_n$ and each diagonal entry equal to $(\lambda_1 + \lambda_2 + \cdots + \lambda_n)/n$. Since these diagonal entries are all non-negative, $X \in \mathcal{I}_{1,n}^\circ$ as desired.
    
    Next, we consider the $k = n$ case. It is well-known that $S_j(\lambda_1,\lambda_2,\ldots,\lambda_n) \geq 0$ for all $1 \leq j \leq n$ is equivalent to $\lambda_j \geq 0$ for all $1 \leq j \leq n$. Since we can construct a positive semidefinite matrix (i.e., a matrix $X \in M_n^{+} = \mathcal{I}_{n,n}^\circ$) having any set of non-negative eigenvalues, we are done with this case.
    
    All that remains is to prove the ``conversely'' statement when $k=n-1$. Let $\lambda_1$, $\lambda_2$, $\ldots$, $\lambda_n$ be real numbers such that $S_j(\lambda_1,\lambda_2,\ldots,\lambda_n) \geq 0$ for all $1 \leq j \leq n-1$. Let $X \in M_n$ be a circulant matrix with monic characteristic polynomial
    \[
        p(z) = \prod_{j=1}^{n}(z-\lambda_i) = z^n+\sum_{j=1}^n(-1)^jS_j(\lambda_1,\lambda_2,\ldots,\lambda_n)z^{n-j}.
    \]
    Since the roots of $p$ are all real, $X$ must be Hermitian. Furthermore,
    \[
        p^{\prime}(z) = nz^{n-1} + \sum_{j=1}^{n-1}(-1)^j(n-j)S_j(\lambda_1,\lambda_2,\ldots,\lambda_n)z^{n-j-1}
    \]
    must also have all of its roots real. In fact, since the coefficients of $p^{\prime}(z)$ are alternating, all of its roots must be non-negative. Lemma~\ref{circlemma} then tells us that every $(n-1)\times(n-1)$ principal submatrix of $X$ has all eigenvalues non-negative, so we conclude that $X\in \mathcal{I}_{n-1,n}^\circ$, completing the proof.
\end{proof}

It is worth noting that Theorem~\ref{thm:sym_poly} gives a complete characterization of the possible spectra of $k$-locally PSD matrices (for all values of $k$) when $n = 3$. Given $\lambda_1$, $\lambda_2$, $\lambda_3 \in \mathbb{R}$ satisfying Inequality~\eqref{eq:sym_poly_eig}, it is straightforward to construct $X \in \mathcal{I}_{k,3}^\circ$ with eigenvalues $\lambda_1$, $\lambda_2$, $\lambda_3$ when $k = 1$ or $k = 3$. To construct such an $X$ when $k = 2$ (i.e., to make the ``conversely'' direction of the proof above a bit more explicit when $n = 3$ and $k = 2$), note that in this case we have
\[
    \lambda_1 + \lambda_2 + \lambda_3 \geq 0 \quad \text{and} \quad \lambda_1\lambda_2 + \lambda_1\lambda_3 + \lambda_2\lambda_3 \geq 0.
\]
If $U \in M_3$ is the Fourier matrix (i.e., $u_{j,\ell} = \mathrm{exp}\big(2\pi i(j-1)(\ell-1)/3\big)$ for all $1\leq j,\ell \leq 3$) then $X := U\mathrm{diag}(\lambda_1,\lambda_2,\lambda_3)U^*$ has eigenvalues $\lambda_1$, $\lambda_2$, and $\lambda_3$, and the proof of Theorem~\ref{thm:sym_poly} shows that $X \in \mathcal{I}_{2,3}^\circ$. To verify this claim a bit more directly, note that $X$ is a Hermitian circulant matrix and thus must have the form
\[
   X = \begin{bmatrix}
       a & b & \overline{b} \\
      \overline{b} & a & b \\
      b & \overline{b} & a
  \end{bmatrix}
\]
for some $a \in \mathbb{R}$ and $b \in \mathbb{C}$. Since $\tr(X) = 3a = \lambda_1 + \lambda_2 + \lambda_3$, we conclude that $a = (\lambda_1 + \lambda_2 + \lambda_3)/3 \geq 0$. Furthermore, the Frobenius norm of $X$ satisfies $\|X\|_{\textup{F}}^2 = 3a^2 + 6|b|^2 = \lambda_1^2 + \lambda_2^2 + \lambda_3^2$. If we substitute $a = (\lambda_1 + \lambda_2 + \lambda_3)/3$ into this formula and rearrange, we learn that
\begin{align*}
    |b|^2 & = \frac{1}{9}\Big( \big(\lambda_1^2 + \lambda_2^2 + \lambda_3^2\big) - \big( \lambda_1\lambda_2 + \lambda_1\lambda_3 + \lambda_2\lambda_3 \big) \Big).
\end{align*}
Since $\lambda_1\lambda_2 + \lambda_1\lambda_3 + \lambda_2\lambda_3 \geq 0$, it follows that $|b|^2 \leq (\lambda_1 + \lambda_2 + \lambda_3)^2/9 = a^2$. This implies that every $2 \times 2$ principal submatrix of $X$ is positive semidefinite, so $X \in \mathcal{I}_{2,3}^\circ$, as claimed.

The smallest open case is thus $n = 4$, $k = 2$, where Theorem~\ref{thm:sym_poly} gives the necessary conditions that every $X \in \mathcal{I}_{2,4}^\circ$ has eigenvalues $\lambda_1$, $\lambda_2$, $\lambda_3$, $\lambda_4 \in \mathbb{R}$ satisfying
\[
    S_1(\lambda_1,\lambda_2,\lambda_3,\lambda_4) \geq 0 \quad \text{and} \quad S_2(\lambda_1,\lambda_2,\lambda_3,\lambda_4) \geq 0,
\]
but it is not clear whether or not these conditions are tight (i.e., whether or not, given $\lambda_1$, $\lambda_2$, $\lambda_3$, $\lambda_4$ satisfying this pair of inequalities, we can construct $X \in \mathcal{I}_{2,4}^\circ$ with these eigenvalues). We present a computational method in Section~\ref{sec:computational_method_factor_pos} that can be used to construct a matrix $X \in \mathcal{I}_{k,n}^\circ$ with a given spectrum. This method numerically suggests that Theorem~\ref{thm:sym_poly} might be tight in the $k = 2$, $n = 4$ case, but a proof remains elusive.

\subsection{A Computational Method for Construction}\label{sec:computational_method_factor_pos}

We now present a numerical method that attempts to construct a $k$-locally PSD matrix with a given spectrum $\bm{\lambda} = (\lambda_1, \lambda_2, \ldots, \lambda_n)$ satisfying Inequalities~\eqref{eq:sym_poly_eig}. This algorithm is adapted from the method of \cite[Section~3.2.4]{CG05} for numerically constructing a matrix with specified eigenvalues in a given affine space of matrices.

Our goal is to construct a matrix that is in both the set $\mathcal{I}_{k,n}^\circ$ of matrices that are $k$-locally PSD, and also the set $M(\bm{\lambda})$ of matrices with the given spectrum:
\[
    M(\bm{\lambda}) \defeq \big\{ U\mathrm{diag}(\bm{\lambda}) U^* : U \text{ is unitary} \big\}.
\]
To find such a matrix, we try to compute
\begin{align}\label{eq:construction_norm}
    \min_{X \in \mathcal{I}_{k,n}^\circ, Y \in M(\bm{\lambda})} \|X - Y\|_{\textup{F}}.
\end{align}
(Here we have used the Frobenius norm, but many other matrix norms like the operator or trace norm would work just as well for our purposes.)

Indeed, there exists a matrix in $M(\bm{\lambda}) \cap \mathcal{I}_{k,n}^\circ$ if and only if this quantity equals $0$. Computing the quantity in Equation~\eqref{eq:construction_norm} explicitly is difficult due to the fact that $M(\bm{\lambda})$ is not convex. However, we can approximate it by iteratively bouncing back and forth between $M(\bm{\lambda})$ and $\mathcal{I}_{k,n}^\circ$, decreasing the value of the quantity~\eqref{eq:construction_norm} at each step, as illustrated schematically in Figure~\ref{fig:comp_iterate}.
\begin{figure}[!htb]
\begin{center}
 \includegraphics[scale=0.3]{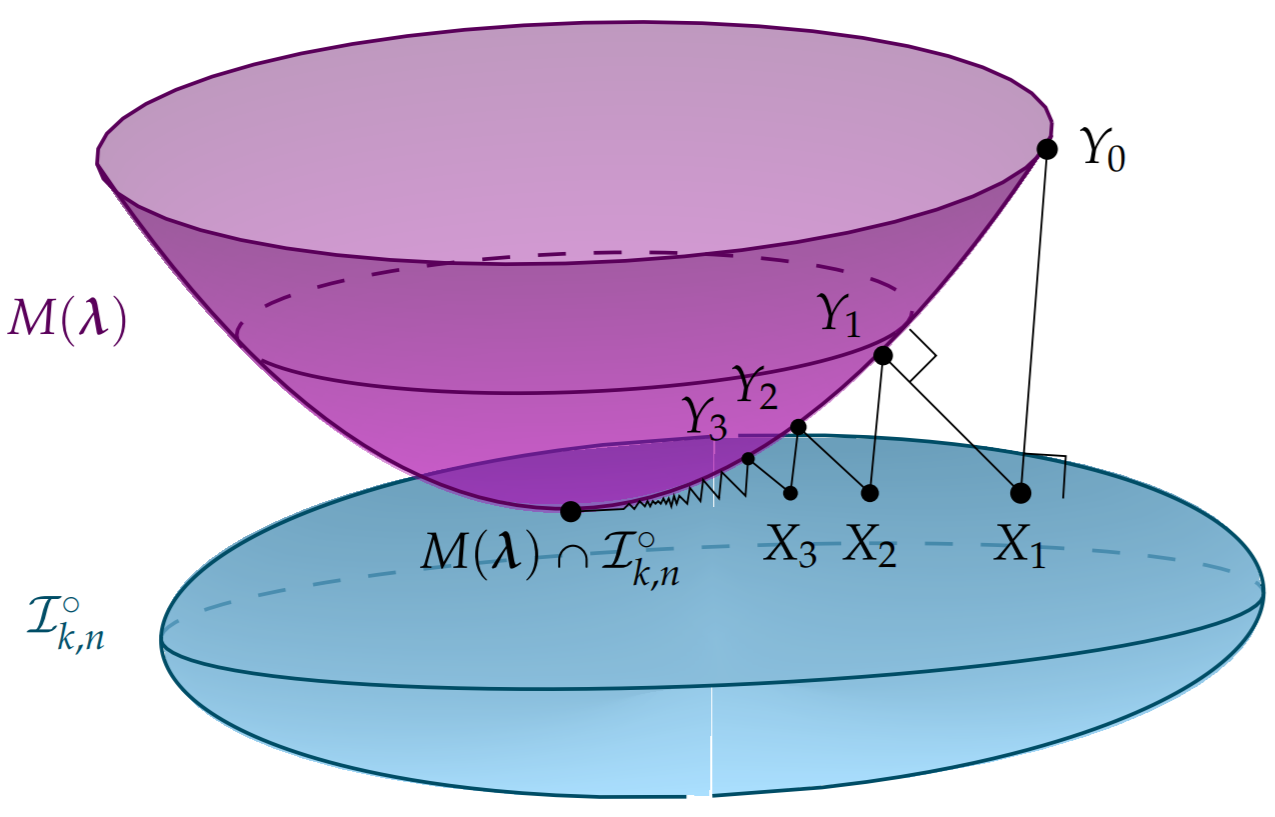}    
\end{center}\caption{A schematic of an iterative algorithm for finding a matrix in $M(\bm{\lambda}) \cap \mathcal{I}_{k,n}^\circ$ by bouncing back and forth between $M(\bm{\lambda})$ and $\mathcal{I}_{k,n}^\circ$.}
    \label{fig:comp_iterate}
\end{figure}

More explicitly, we let $Y_0 \in M(\bm{\lambda})$ be a random matrix with the desired spectrum (obtained by choosing $U$ to be a random unitary matrix and setting $Y_0 = U\mathrm{diag}(\bm{\lambda}) U^*$, for example). We then set $j = 1$ and proceed as follows:

\begin{itemize}
    \item Let
    \[
        X_j := \argmin_{X \in \mathcal{I}_{k,n}^\circ} \|X - Y_{j-1}\|_{\textup{F}}.
    \]
    We note that, since $\mathcal{I}_{k,n}^\circ$ is semidefinite-representable, $X_j$ can be computed via a semidefinite program (we direct the interested reader to any number of standard references, like \cite{Wat18}, for an introduction to semidefinite programming). In particular, $X_j$ can be computed via the following semidefinite program:
	\begin{align*}
    	\text{minimize: } & \ \tr(Z) \\
    	\text{subject to: } & \ \begin{bmatrix}
    	    I_n & X - Y_{j-1} \\ X - Y_{j-1} & Z
    	\end{bmatrix} \succeq O \\
    	& \ [X]_{J,J} \succeq O \text{ for each } J \subseteq [n] \text{ with } |J| = k,
	\end{align*}
	where $[n] = \{1,2,\ldots,n\}$ and $[X]_{J,J}$ is the $|J| \times |J|$ principal submatrix of $X$ that is obtained by taking the rows and columns of $X$ indexed by the members of $J$. The constraint that $[X]_{J,J} \succeq O$ for each $J \subseteq [n]$ with $|J| = k$ is simply equivalent to $X \in \mathcal{I}_{k,n}^\circ$. Furthermore, standard facts about Schur complements imply that, in this semidefinite program, we have $\tr(Z) = \|X - Y_{j-1}\|_{\textup{F}}^2$.
	
	\item Let
    \[
        Y_j := \argmin_{Y \in M(\bm{\lambda})} \|X_j - Y\|_{\textup{F}}.
    \]
    It follows from the equality condition of the Hoffman--Wielandt inequality \cite{hoffman1953} (which gives a lower bound for the Frobenius norm of the difference of two normal matrices) that if $X_j$ has spectral decomposition $X_j = UDU^*$, then $Y_j = U\Lambda U^*$, where $\Lambda$ is the diagonal matrix with diagonal entries $\lambda_1$, $\lambda_2$, $\ldots$, $\lambda_n$ arranged in the same relative order as the diagonal entries of $D$ (e.g., if the diagonal entries of $D$ are arranged largest-to-smallest, then so should be the diagonal entries of $\Lambda$).
    
    \item Increment $j$ by $1$ and return to the first bullet point, or stop once the difference between $\|X_j - Y_j\|_{\textup{F}}$ and $\|X_{j-1} - Y_{j-1}\|_{\textup{F}}$ falls below some threshold (i.e., once the algorithm has converged).
\end{itemize}

The above algorithm always terminates, since the sequence of values $\|X_j - Y_j\|_{\textup{F}}$ monotonically decreases:
\[
    \|X_j - Y_j\|_{\textup{F}} \leq \|X_j - Y_{j-1}\|_{\textup{F}} \leq \|X_{j-1} - Y_{j-1}\|_{\textup{F}} \quad \text{for all} \quad j \in \mathbb{N}.
\]
It is not clear whether or not the sequence of values $\|X_j - Y_j\|_{\textup{F}}$ always decreases \emph{to zero} (thus giving a matrix in the intersection $M(\bm{\lambda}) \cap \mathcal{I}_{k,n}^\circ$) when $\bm{\lambda}$ satisfies Inequalities~\eqref{eq:sym_poly_eig}, but we have implemented this method in MATLAB \cite{SuppCode} via the CVX package \cite{CVX}, and numerical tests suggest that it works quite well in practice. We illustrate with an example.

\begin{example}
    Consider the problem of constructing a $4 \times 4$ matrix $X$ that is $2$-locally PSD with spectrum $\bm{\lambda} = (10,4,-1,-2)$. Since $S_1(10,4,-1,-2) = 11 \geq 0$ and $S_2(10,4,-1,-2) = 0$, Inequalities~\eqref{eq:sym_poly_eig} hold so we are hopeful that this might be possible.
    
    After running the algorithm described above for 1,000 iterations, we find the following matrix:
    \[
        X \approx \begin{bmatrix}
            1.1278         &    0.0384 + 1.5436i     &    -0.9027 + 1.9431i     &     1.9983 + 0.4081i \\
            0.0384 - 1.5436i  &     2.1138           &    -0.7119 + 2.8455i     &    -1.1455 + 2.5464i \\
            -0.9027 - 1.9431i & -0.7119 - 2.8455i     &          4.0701         &      2.9167 - 2.5504i \\
            1.9983 - 0.4081i   &    -1.1455 - 2.5464i &   2.9167 + 2.5504i  & 3.6882
        \end{bmatrix}.
    \]
    It is straightforward to check that (within numerical precision) the eigenvalues of $X$ are indeed $10$, $4$, $-1$, and $-2$, and every $2 \times 2$ principal submatrix of $X$ is positive semidefinite (so $X$ is $2$-locally PSD).
\end{example}

\subsection{Hyperbolic Cones}\label{sec:hyperbolic_cones}

One way of rephrasing Theorem~\ref{thm:sym_poly} is as saying that the spectrum of every $k$-locally PSD matrix $X \in \mathcal{I}_{k,n}^\circ$ is a member of the cone
\begin{align}\label{eq:Ck_cone_defn}
    C_{k} \defeq \big\{ \bm{\lambda} \in \mathbb{R}^n : S_j(\lambda_1,\lambda_2,\ldots,\lambda_n) \geq 0 \ \text{for all} \ 1 \leq j \leq k \big\},
\end{align}
where $S_j$ is the $j$-th elementary symmetric polynomial from Equation~\eqref{eq:elem_sym_poly}. This cone $C_k$ is invariant under permutation of the entries of $\bm{\lambda}$, and it is furthermore closed and convex. However, convexity is not obvious, and follows from the fact that it is a \textbf{hyperbolic cone}---an object that has developed significant interest in its own right recently \cite{Bra12,R06,SP15}. We now provide a brief introduction to hyperbolicity cones, and we show that $C_k$ really is a hyperbolicity cone, and is thus convex.

\begin{definition}
    A homogeneous polynomial $P$ of degree $m\in \mathbb{N}$ (i.e. $P(\alpha \mathbf{x})=\alpha^m P(\mathbf{x})$ for all $\alpha\in \mathbb{R}$ and $\mathbf{x}\in \mathbb{R}^n$) called \textbf{hyperbolic} with respect to direction $\mathbf{d}\in \mathbb{R}^n$ if the univariate polynomial $p(t) := P(\mathbf{x}+t\mathbf{d})$ has all roots real for every $\mathbf{x}\in \mathbb{R}^n$.
\end{definition}

We can expand $P(\mathbf{x}+t\mathbf{d})$ as 
\begin{equation}\label{P_hyper_expand}
    P(\mathbf{x}+t\mathbf{d})=P(\mathbf{d})\big(t^m+P_1(\mathbf{x})t^{m-1}+P_2(\mathbf{x})t^{m-2}+\cdots+P_{m-1}(\mathbf{x})t+P_m(\mathbf{x})\big),
\end{equation}
for some polynomials $P_i(\mathbf{x})$ ($1 \leq i \leq m$) that are homogeneous of degree $i$. Then the \textbf{hyperbolicity cone} of $P$ with respect to the direction $\mathbf{d}$, denoted by $\Lambda(P,\mathbf{d})$, is the set \cite[Theorem~20]{R06}
\begin{equation}\label{Renegar_desc}
    \Lambda(P,\mathbf{d}) \defeq \{ \mathbf{x}\in\mathbb{R}^n : P_i(\mathbf{x})\geq 0\quad \text{for all} \quad 1 \leq i \leq m\}.
\end{equation}

If we choose $P(\mathbf{x})=S_n(\mathbf{x})=x_1x_2\cdots x_n$ then $P$ is homogeneous of degree $n$ with respect to the direction $\mathbf{d}=\bm{1} = (1,1,\ldots,1)$, since the univariate polynomial $p(t) = S_n(\mathbf{x}+t\bm{1})=\prod_{i=1}^n(x_i+t)$ has all roots real. Expanding $S_n$ as in Equation~\eqref{P_hyper_expand} then gives
\begin{equation}\label{P(S_n)}
    S_n(\mathbf{x}+t\bm{1})=t^n+S_1(\mathbf{x})t^{n-1}+\cdots+S_{n-1}(\mathbf{x})t+S_n(\mathbf{x}),
\end{equation}
where $S_j(\mathbf{x})$ is the elementry symmetric polynomial from Equation~\eqref{eq:elem_sym_poly}. Equation~\eqref{Renegar_desc} then tells us that the hyperbolicity cone of $S_n$ in the direction of $\bm{1}$ is
\[
    \Lambda(S_n,\bm{1}) = \{ \mathbf{x}\in\mathbb{R}^n : S_k(\mathbf{x})\geq 0\quad \text{for all} \quad 1 \leq k \leq n \},
\]
which is the cone $C_n$ from Equation~\eqref{eq:Ck_cone_defn}. We thus conclude that $C_n$ is a hyperbolic cone.

We can similarly see that $C_k$ is a hyperbolicity cone for all $1 \leq k \leq n-1$ by using directional derivatives, as follows. If $P$ is hyperbolic with respect to the direction $\mathbf{d}$, then the directional derivative of $P$ in the direction of $\mathbf{d}$,
\[
    p^\prime(0) = \frac{d}{dt}P(\mathbf{x} + t\mathbf{d})\Big|_{t=0},
\]
is also hyperbolic with respect to $\mathbf{d}$ \cite{Garding}. If we let $P^{(k)}$ denote the $k$-th directional derivative of $S_n$ in the direction of $\bm{1}$, then straightforward computation from Equation~\eqref{P(S_n)} shows that
\[
    P^{(k)}(\mathbf{x} + t\bm{1})=c_{n-k}t^{n-k}+c_{n-k-1}S_1(\mathbf{x})t^{n-k-1}+\cdots+c_1S_{n-k-1}(\mathbf{x})t+c_0S_{n-k}(\mathbf{x}),
\]
where $c_i=(i+k)! / i!$. Since $c_i > 0$, it then follows from Equation~\eqref{Renegar_desc} that the hyperbolicity cone of $P^{(k)}$ with respect to the direction $\bm{1}$ is
\[
    \Lambda(P^{(k)},\bm{1}) = \{ \mathbf{x}\in\mathbb{R}^n : S_j(\mathbf{x})\geq 0 \quad \text{for all} \quad 1 \leq j \leq n-k\},
\]
which is exactly the cone $C_k$ from Equation~\eqref{eq:Ck_cone_defn}. We thus conclude that $C_k$ is a hyperbolicity cone for all $1 \leq k \leq n$, and is thus convex by \cite[Theorem 2]{Garding}.

\section{Absolute k-Incoherence}\label{sec:abs_kcoh}

We now explore the central question of our work: Which quantum states can be determined to be $k$-incoherent based only on their spectrum (or equivalently, which matrices can be shown to have factor width at most $k$ based only on their spectrum)?

\begin{definition}\label{defn:abs_k_incoh}
    A mixed quantum state $\rho \in M_n^{+}$ is \textbf{absolutely $\mathbf{k}$-incoherent} if $U\rho U^* \in \mathcal{I}_{k,n}$ for all unitary matrices $U \in M_n$.
\end{definition}

We note that absolute $k$-incoherence is trivial if $k = 1$, since the only absolutely $1$-incoherent state is the maximally mixed state $\rho = I/n$ (a fact that follows immediately from the spectral decomposition and the fact that $\mathcal{I}_{1,n}$ is the set of diagonal density matrices). The other extreme is also trivial: if $k = n$ then $\mathcal{I}_{n,n}$ is the set of \emph{all} density matrices, so every mixed quantum state is absolutely $n$-incoherent.

However, if $2 \leq k \leq n-1$ then the set of absolutely $k$-incoherent states is non-trivial, and our goal is to characterize this set in a way that makes membership easy to check. Since absolute $k$-incoherence only depends on the spectrum of a state, our goal is to find inequalities or other simple-to-check properties of the spectrum that guarantee that the state is or is not absolutely $k$-incoherent.

In order to derive bounds on the set of absolutely $k$-incoherent quantum states, we will use duality of the convex cones $C_k$ defined in Equation~\eqref{eq:Ck_cone_defn}. In particular, the following result is our starting point:

\begin{theorem}\label{thm:dual_ck_abs_coh}
    Suppose $\rho \in M_n^{+}$ and let $\bm{\lambda} = (\lambda_1, \lambda_2, \ldots, \lambda_n)$ be a vector whose entries are eigenvalues of $\rho$ (in some order). If $\bm{\lambda} \in C_k^\circ$ then $\rho$ is absolutely $k$-incoherent. Furthermore, the converse holds if $k \in \{1,n-1,n\}$.
\end{theorem}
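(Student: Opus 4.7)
The plan is to convert the matrix-level condition defining absolute $k$-incoherence into a vector-level condition on $C_k$ and $C_k^{\circ}$ via cone duality and a classical trace inequality.

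First, I would note that for any trace-one PSD $\rho$, the bipolar theorem applied to the closed convex cone generated by $\mathcal{I}_{k,n}$ gives the equivalence $\rho\in\mathcal{I}_{k,n}\Longleftrightarrow\tr(X\rho)\geq 0$ for every $X\in\mathcal{I}_{k,n}^{\circ}$. Thus $\rho$ is absolutely $k$-incoherent if and only if $\tr(X\,U\rho U^{*})\geq 0$ for every $X\in\mathcal{I}_{k,n}^{\circ}$ and every unitary $U$. As $U$ varies, $U\rho U^{*}$ runs over all Hermitian matrices with spectrum $\bm{\lambda}(\rho)$, and the Ky Fan/Richter trace inequality then yields
\[
\min_{U\text{ unitary}}\tr(X\,U\rho U^{*}) \;=\; \sum_{i=1}^{n}\lambda_i(X)\,\lambda_{n+1-i}(\rho),
\]
with eigenvalues sorted decreasingly on each side and the minimum attained by some unitary $U$. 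Absolute $k$-incoherence of $\rho$ is therefore equivalent to this sum being non-negative for \emph{every} $X\in\mathcal{I}_{k,n}^{\circ}$.

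For the forward direction (every $k$), I would exploit permutation invariance: each elementary symmetric polynomial $S_j$ is symmetric, so $C_k$ and hence $C_k^{\circ}$ are invariant under coordinate permutations. Theorem~\ref{thm:sym_poly} gives $\bm{\lambda}(X)\in C_k$ for every $X\in\mathcal{I}_{k,n}^{\circ}$, and the displayed sum is then the dot product of $\bm{\lambda}(X)\in C_k$ with a coordinate permutation of $\bm{\lambda}(\rho)\in C_k^{\circ}$, which is non-negative by definition of $C_k^{\circ}$. For the converse direction, the restriction $k\in\{1,n-1,n\}$ appears exactly because the ``conversely'' half of Theorem~\ref{thm:sym_poly} is known only in those cases: it supplies, for each $\bm{\alpha}\in C_k$, a matrix $X\in\mathcal{I}_{k,n}^{\circ}$ with $\bm{\lambda}(X)=\bm{\alpha}$. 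Plugging such $X$ into the spectral inequality above forces $\bm{\alpha}\cdot\bm{\mu}\geq 0$ for every $\bm{\alpha}\in C_k$, where $\bm{\mu}$ denotes $\bm{\lambda}(\rho)$ sorted increasingly, and permutation invariance of $C_k^{\circ}$ then places $\bm{\lambda}(\rho)$ itself in $C_k^{\circ}$.

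The main obstacle I anticipate is purely bookkeeping: one must carefully track that the decreasing-versus-increasing pairing produced by the trace inequality can be absorbed into the permutation invariance of $C_k^{\circ}$, and that the bipolar reformulation really is the right substitute for the original defining condition. No deeper input beyond Theorem~\ref{thm:sym_poly} and the standard Ky Fan trace inequality should be needed, and the asymmetry between the two directions of the proposition mirrors precisely the asymmetry already present in Theorem~\ref{thm:sym_poly}.
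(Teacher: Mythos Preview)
Your proposal is correct and follows essentially the same route as the paper: both arguments reduce absolute $k$-incoherence to a spectral inequality via a trace inequality (you cite Ky Fan/Richter, the paper cites \cite[Problem~III.6.14]{Bha97}, which is the same statement), then feed in Theorem~\ref{thm:sym_poly} together with the permutation invariance of $C_k$ for the forward direction, and use the ``conversely'' half of Theorem~\ref{thm:sym_poly} for $k\in\{1,n-1,n\}$ to reverse the argument. The only cosmetic difference is that you make the bipolar/double-dual step explicit, whereas the paper leaves it implicit.
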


\begin{proof}
    Suppose $X \in \mathcal{I}_{k,n}^\circ$ and let $\bm{\mu} = (\mu_1, \mu_2, \ldots, \mu_n)$ be a vector whose entries are the eigenvalues of $X$. Since $\bm{\lambda} \in C_k^\circ$, and we know from Theorem~\ref{thm:sym_poly} that $\bm{\mu} \in C_k$, we conclude that $\bm{\lambda} \cdot \bm{\mu} \geq 0$. Since $C_k$ is invariant under permutations of its entries, this is equivalent to
    \begin{align}\label{ineq:lam_mu}
        \sum_{j=1}^{n} \lambda_j \mu_{\pi(j)} \geq 0
    \end{align}
    for all permutations $\pi : \{1,2,\ldots, n\} \rightarrow \{1,2,\ldots, n\}$.
    
    It then follows from \cite[Problem~III.6.14]{Bha97} that this is equivalent to $\tr(X U\rho U^*) \geq 0$ for all unitary matrices $U \in M_n$. Since $X \in \mathcal{I}_{k,n}^\circ$ was arbitrary, it follows that $U\rho U^* \in \mathcal{I}_{k,n}$ for all unitary matrices $U \in M_n$, which means exactly that $\rho$ is absolutely $k$-incoherent.
    
    For the ``furthermore'' statement, we just note that every step in this proof can be reversed (for all values of $k$), except for the appeal to Theorem~\ref{thm:sym_poly}. However if $k \in \{1,n-1,n\}$ then Theorem~\ref{thm:sym_poly} does tell us that $\bm{\mu} \in C_k$ implies $\bm{\mu}$ is a vector whose entries are the eigenvalues of some $X \in \mathcal{I}_{k,n}^\circ$, which completes the proof.
\end{proof}

Theorem~\ref{thm:dual_ck_abs_coh} raises a natural question---given a vector of eigenvalues $\bm{\lambda}$, how can we determine whether or not it is a member of $C_k^\circ$ (and thus whether or not the corresponding quantum state is absolutely $k$-incoherent)? These cones are actively being researched in the hyperbolic cone literature \cite{Bra12}, and recently it was shown that they admit a polynomial-size semidefinite representation \cite{SP15} (however, the specifics of that semidefinite representation are rather complicated, so we do not repeat them here). In particular, this means that membership in $C_k^\circ$ can be determined in polynomial time, so the hypotheses of Theorem \ref{thm:dual_ck_abs_coh} can be checked efficiently.

Our next result shows that absolutely $k$-incoherent states must have large rank (this is analogous to the already-known fact that absolutely separable states must have large rank \cite[Proposition~7.3]{JLN15}).

\begin{theorem}\label{thm:k_incoh_projection}
    Suppose $\rho \in M_n^{+}$ is a quantum state. If $\rank(\rho) \leq n-k$ then $\rho$ is not absolutely $k$-incoherent. Furthermore, this bound is tight; the rank-$(n-k+1)$ state with its non-zero eigenvalues equal to each other is absolutely $k$-incoherent.
\end{theorem}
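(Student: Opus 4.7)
The first claim follows directly from the witness construction in the proof of Theorem~\ref{thm:neg_eigs_of_kfacpos}. Take $X = I - cP_{n-k}$, where $c > 1$ and $P_{n-k}$ is the orthogonal projection onto the ``spread'' subspace $\mathcal{S}_{n-k}$ of dimension $n-k$ that contains no nonzero vector with fewer than $k+1$ nonzero entries. Since $\rank(\rho) \leq n-k = \dim \mathcal{S}_{n-k}$, I would pick any unitary $U$ whose restriction to $\op{range}(\rho)$ is an isometric embedding into $\mathcal{S}_{n-k}$; then $P_{n-k}\,U\rho U^* = U\rho U^*$, and a one-line computation gives
\[
    \tr(X\,U\rho U^*) = \tr(U\rho U^*) - c\,\tr(P_{n-k}\,U\rho U^*) = 1 - c < 0.
\]
Because $X \in \mathcal{I}_{k,n}^\circ$ is a valid $k$-coherence witness, this forces $U\rho U^* \notin \mathcal{I}_{k,n}$, so $\rho$ is not absolutely $k$-incoherent.

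For the tightness half, write $\rho = \tfrac{1}{n-k+1}P$ for an arbitrary rank-$(n-k+1)$ projection $P$. Since absolute $k$-incoherence depends only on the spectrum, and every state with this spectrum equals $\tfrac{1}{n-k+1}\,UPU^*$ for some unitary $U$, the claim reduces to showing that every rank-$(n-k+1)$ projection is $k$-incoherent. The plan is to invoke the (always-valid) sufficient direction of Theorem~\ref{thm:dual_ck_abs_coh}: it suffices to verify that $\bm\lambda = (c,\ldots,c,0,\ldots,0)$ (with $n-k+1$ copies of $c = 1/(n-k+1)$) lies in $C_k^\circ$. By the permutation invariance of $C_k$, this boils down to the single spectral inequality
\[
    \mu_1 + \mu_2 + \cdots + \mu_{n-k+1} \geq 0 \quad \text{for every } \bm\mu \in C_k \text{ sorted as } \mu_1 \leq \mu_2 \leq \cdots \leq \mu_n.
\]

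The hyperbolicity-cone description from Section~\ref{sec:hyperbolic_cones} is the natural tool: $\bm\mu \in C_k$ iff the $(n-k)$-th derivative of $q(t) = \prod_j (t-\mu_j)$ has all non-negative roots $\sigma_1 \leq \cdots \leq \sigma_k$. Iterated Rolle's theorem then gives the interlacing $\mu_j \leq \sigma_j \leq \mu_{j+n-k}$ (which already forces $\mu_{n-k+1},\ldots,\mu_n \geq 0$) together with the power-sum identity $\sum_j \sigma_j = \tfrac{k}{n}\sum_j \mu_j$. The main obstacle is extracting the required lower bound on $\mu_1+\cdots+\mu_{n-k+1}$ from these relations, since the interlacing alone only bounds such partial sums from above. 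The cleanest route is likely to sidestep the inequality entirely and construct the $k$-incoherent decomposition of $P$ directly: for each $k$-subset $S \subseteq [n]$, dimension-counting shows that $\op{range}(P) \cap \op{span}\{\mathbf{e}_i : i \in S\}$ has dimension at least $(n-k+1)+k-n = 1$, so contains a nonzero vector $\mathbf{v}_S$ supported on $S$. Choosing $\mathbf{v}_S$ via $(k-1)\times(k-1)$ cofactors of a matrix whose columns span $\op{range}(P)^\perp$ (the Plücker normalization), a Cauchy--Binet calculation should yield the identity $\sum_S \mathbf{v}_S \mathbf{v}_S^* = P$, producing the required $k$-incoherent decomposition.
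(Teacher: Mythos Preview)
Your argument for the first claim is essentially the paper's: both rotate $\rho$ into the $(n-k)$-dimensional negative eigenspace of the witness $X\in\mathcal{I}_{k,n}^\circ$ produced in Theorem~\ref{thm:neg_eigs_of_kfacpos}.

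For the tightness half, your abandoned route~(a) is in fact exactly the paper's route. The inequality you isolate---that every $\bm\mu\in C_k$, sorted increasingly, satisfies $\mu_1+\cdots+\mu_{n-k+1}\ge 0$---is precisely what the paper proves, but not via Rolle-type interlacing. The missing tool is Newton's inequalities: they give the lemma (Lemma~\ref{Speyer}) that $S_j(\mu_1,\ldots,\mu_n)\ge 0$ for $1\le j\le k$ forces $S_j(\mu_1,\ldots,\mu_{n-1})\ge 0$ for $1\le j\le k-1$. Since $S_j$ is symmetric, one may drop \emph{any} coordinate; iterating $k-1$ times yields the partial-sum bound directly. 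So your diagnosis that interlacing alone was insufficient was correct, but the fix is a single classical inequality rather than a change of strategy.

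Your route~(b) is correct and genuinely different from the paper. With $W\in M_{n,k-1}$ having orthonormal columns spanning $\op{range}(P)^\perp$ and $(v_S)_i$ taken to be the signed conjugate minor $(-1)^{\mathrm{pos}(i,S)}\,\overline{\det(W_{S\setminus\{i\}})}$, one checks $(v_S)_i=(-1)^k\,\overline{\det\big([\,W\mid e_i\,]_S\big)}$, and then Cauchy--Binet gives
\[
    \sum_{|S|=k}(v_S)_i\,\overline{(v_S)_j}\;=\;\det\!\big([\,W\mid e_i\,]^*[\,W\mid e_j\,]\big)\;=\;\det\begin{pmatrix}I_{k-1} & W^*e_j\\ e_i^*W & \delta_{ij}\end{pmatrix}\;=\;\delta_{ij}-(WW^*)_{ij}\;=\;P_{ij},
\]
so $\sum_S v_S v_S^* = P$ as claimed. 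The trade-off: the paper's Newton-inequality route passes through $C_k^\circ$ and yields a reusable spectral fact about $C_k$ (any $n-k+1$ coordinates have non-negative sum), while your construction is more explicit---it actually exhibits a $k$-incoherent decomposition of every rank-$(n-k+1)$ projection---and bypasses the duality machinery altogether.
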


We need the following lemma to help us prove the above theorem. This lemma's proof is from \cite{SpeyMO}, but we include it here for completeness.

\begin{lemma}\label{Speyer}
    Let $1 \leq k \leq n$ be integers. If $S_j(\lambda_1, \lambda_2, \cdots, \lambda_n) \geq 0$ for all $1\leq j\leq k$ then
    \[
        S_j(\lambda_1, \lambda_2, \cdots, \lambda_{n-1})\geq 0 \quad \text{for all} \quad 1\leq j\leq k-1.
    \]
\end{lemma}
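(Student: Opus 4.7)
The plan is to exploit the factorization $\prod_{i=1}^n(t+\lambda_i) = (t+\lambda_n)\prod_{i=1}^{n-1}(t+\lambda_i)$, which upon comparing coefficients of $t^{n-j}$ yields the recursive identity
\[
    S_j(\lambda_1,\ldots,\lambda_n) = a_j + \lambda_n a_{j-1},
\]
where I abbreviate $a_j := S_j(\lambda_1,\ldots,\lambda_{n-1})$ (with $a_0 = 1$). Solving the recursion gives the closed form $a_j = \sum_{i=0}^j (-\lambda_n)^{j-i} S_i$, and I would split the analysis by the sign of $\lambda_n$. When $\lambda_n \leq 0$, each $(-\lambda_n)^{j-i}$ is non-negative and each $S_i$ with $0 \leq i \leq k$ is non-negative by hypothesis, so $a_j$ is a sum of non-negative terms for every $j \leq k$---more than enough.

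The substantive case is $\lambda_n > 0$, where the signs in the closed form alternate. I would argue by contradiction: let $j_0 \in \{1,\ldots,k-1\}$ be the smallest index with $a_{j_0} < 0$, so that $a_0, a_1, \ldots, a_{j_0-1} \geq 0$. From $S_{j_0} \geq 0$ one extracts both that $a_{j_0-1} > 0$ (otherwise $S_{j_0} = a_{j_0} < 0$) and $\lambda_n \geq -a_{j_0}/a_{j_0-1}$; then $S_{j_0+1} \geq 0$ forces $a_{j_0+1} \geq -\lambda_n a_{j_0} \geq a_{j_0}^2/a_{j_0-1}$, which rearranges to $a_{j_0-1}a_{j_0+1} \geq a_{j_0}^2$. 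To contradict this I would invoke Newton's inequality on the real-rooted polynomial $q(t) = \prod_{i=1}^{n-1}(t+\lambda_i)$, giving $a_{j_0-1}a_{j_0+1} \leq a_{j_0}^2 \cdot j_0(n-1-j_0)/((n-j_0)(j_0+1))$; the ratio is strictly less than $1$ whenever $1 \leq j_0 \leq n-2$, forcing $a_{j_0}^2 < a_{j_0}^2$. The remaining edge case $j_0 = n-1$ (which can only arise when $k = n$) I would dispatch directly: then $a_n = 0$ makes $S_n = \lambda_n a_{n-1}$ strictly negative, contradicting $S_n \geq 0$.

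The main obstacle is the $\lambda_n > 0$ case: the algebraic identity on its own gives no sign information, and the contradiction only closes after pairing the hypotheses on $S_{j_0}$ and $S_{j_0+1}$ with Newton's inequality, whose applicability relies on the real-rootedness of $q$.
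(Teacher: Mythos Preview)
Your proof is correct and follows essentially the same approach as the paper's: both exploit the recursion $S_j = a_j + \lambda_n a_{j-1}$, split on the sign of $\lambda_n$, and in the positive case pin down a minimal bad index and close the argument via Newton's inequality applied to the real-rooted polynomial $\prod_{i=1}^{n-1}(t+\lambda_i)$. The only differences are cosmetic: the paper frames everything as a contrapositive and, in the $\lambda_n>0$ case, reads off from Newton a dichotomy on $\lambda_n$ that forces either $S_{j_0}<0$ or $S_{j_0+1}<0$, whereas you combine the constraints $S_{j_0}\ge 0$ and $S_{j_0+1}\ge 0$ first and then contradict Newton---the same inequalities rearranged.
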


\begin{proof}
    For brevity, we define $a_j := S_j(\lambda_1, \lambda_2, \cdots, \lambda_n)$ and $b_j := S_j(\lambda_1, \lambda_2, \cdots, \lambda_{n-1})$. Our goal is to show that if $a_j\geq 0$ for all $1\leq j\leq k$ then $b_j\geq 0$ for all $1\leq j\leq k-1$. We will prove the contrapositive: we will show that if there exists $1\leq j\leq k-1$ with $b_j < 0$ then there exists $1\leq j^\prime\leq k$ with $a_{j^\prime} < 0$ too.
    
    Without loss of generality, we may assume that $j$ is minimal (i.e., $b_j < 0$, but $b_1, b_2, \cdots, b_{j-1}\geq 0$). Consider the following two cases:
    \begin{itemize}
        \item[Case 1:] $\lambda_n<0$. It is straightforward to show that $a_j = \lambda_n b_{j-1} + b_j$, which must be strictly negative since $\lambda_n < 0$, $b_j < 0$, and $b_{j-1} \geq 0$. That is, we can choose $j^\prime = j$.
        
        \item[Case 2:] $\lambda_n\geq 0$. As in Case~1, we know that $a_j = \lambda_n b_{j-1} + b_j$. It follows that if $b_{j-1} \leq 0$ then $a_j<0$. A similar argument shows that if $b_{j+1} \leq 0$ then $a_{j+1} < 0$. We thus assume for the remainder of this Case that $b_{j-1}, b_{j+1} > 0$. By using Newton's inequalities, we see that 
        \[
            \frac{b_{j-1} b_{j+1}}{\binom{n-1}{j-1} \binom{n-1}{j+1}} \leq \frac{b_j^2}{\binom{n-1}{j}^2},
        \]
        which simplifies as
        \[
            b_{j-1} b_{j+1} \leq \frac{j(n-j-1)}{(j+1)(n-j)} b_j^2 < b_j^2.
        \]
        Keeping in mind that $b_j < 0$, this tells us that
        \[
            0 \leq \frac{b_{j+1}}{- b_j} < \frac{-b_j}{b_{j-1}}.
        \]
        We must either have $\lambda_n > \tfrac{b_{j+1}}{-b_j}$ or $\lambda_n < \tfrac{-b_j}{b_{j-1}}$ (or both). If $\lambda_n > \tfrac{b_{j+1}}{-b_j}$ then $a_{j+1} = \lambda_n b_{j} + b_{j+1} < 0$ (so we can choose $j^\prime = j+1$), and if $\lambda_n < \tfrac{-b_j}{b_{j-1}}$ then $a_j = \lambda_n b_{j-1} + b_j < 0$ (so we can choose $j^\prime = j$).
    \end{itemize}
    With both cases taken care of, the proof is complete.
\end{proof}

\begin{proof}[Proof of Theorem~\ref{thm:k_incoh_projection}]
    To see that there does not exist an absolutely $k$-incoherent state $\rho$ with $\rank(\rho) \leq n-k$, recall from Theorem~\ref{thm:neg_eigs_of_kfacpos} that there exists $X \in \mathcal{I}_{k,n}^\circ$ with exactly $n-k$ eigenvalues. If $\rank(\rho) \leq n-k$ then there exists a unitary matrix $U \in M_n$ so that $U\rho U^*$ is supported on that $(n-k)$-dimensional negative eigenspace, so $\tr\big(X(U\rho U^*)\big) < 0$, so $U\rho U^*$ is not $k$-incoherent, so $\rho$ is not absolutely $k$-incoherent.
    
    Now let $\rho$ be the rank-$(n-k+1)$ state with its non-zero eigenvalues equal to each other. By using Lemma~\ref{Speyer} a total of $k-1$ times, we see that $(\lambda_1,\ldots,\lambda_n) \in C_k$ implies $\lambda_1 + \lambda_2 + \cdots + \lambda_{n-k+1} \geq 0$. In particular, this means that if $X \in \mathcal{I}_{k,n}^\circ$ has eigenvalues $\lambda_1 \geq \lambda_2 \geq \cdots \geq \lambda_n$ then
    \[
        \tr(X\rho) \geq \frac{\lambda_1 + \lambda_2 + \cdots + \lambda_{n-k+1}}{n-k+1} \geq 0,
    \]
    so $\rho$ is absolutely $k$-incoherent.
\end{proof}

Theorem~\ref{thm:k_incoh_projection} provides us with a single non-trivial example of a mixed state that is absolutely $k$-incoherent. Our next theorem provides many more:

\begin{theorem}\label{thm:k_incoh_max_eig}
    Suppose $\rho \in M_n^{+}$ has maximal eigenvalue $\lambda_{\text{max}}$. If $\lambda_{\text{max}} \leq 1/(n-k+1)$ then $\rho$ is absolutely $k$-incoherent.
\end{theorem}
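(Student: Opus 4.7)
The plan is to invoke Theorem~\ref{thm:dual_ck_abs_coh}: since absolute $k$-incoherence follows from $\bm{\lambda}\in C_k^\circ$ (where $\bm{\lambda}$ is the spectrum of $\rho$ viewed as a vector in $\mathbb{R}^n$), and since $C_k$---hence $C_k^\circ$---is permutation invariant, it suffices to prove that every probability vector $\bm{\lambda}$ with $\lambda_{\max}\leq 1/(n-k+1)$ satisfies $\bm{\lambda}\cdot \bm{\mu}\geq 0$ for every $\bm{\mu}\in C_k$.

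First I would extract the following consequence of Lemma~\ref{Speyer}: because the elementary symmetric polynomials $S_j$ are symmetric in their arguments, the lemma allows one to drop \emph{any} entry from a vector in $C_k$, not only the last one. Iterating $k-1$ times then yields that, for every $\bm{\mu}\in C_k$ and every $(n-k+1)$-element subset $T\subseteq\{1,\ldots,n\}$, the partial sum $\sum_{i\in T}\mu_i$ is nonnegative.

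Next I would analyze the polytope $P\defeq\{\bm{\lambda}\in\mathbb{R}^n:\lambda_i\geq 0,\ \sum_i \lambda_i=1,\ \lambda_i\leq 1/(n-k+1)\}$ of allowable spectra. A standard active-constraint count (any vertex of $P$ makes $n-1$ of the $2n$ inequality constraints tight, and for each coordinate the upper and lower bounds cannot both be tight) forces every vertex of $P$ to have exactly $n-k+1$ coordinates equal to $1/(n-k+1)$ and the remaining $k-1$ coordinates equal to $0$. Thus the vertices of $P$ are precisely the vectors $\bm{\chi}_T/(n-k+1)$ indexed by the $(n-k+1)$-subsets $T$ of $\{1,\ldots,n\}$. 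Combining the two ingredients, at every such vertex
\[
    \frac{\bm{\chi}_T}{n-k+1}\cdot \bm{\mu} \;=\; \frac{1}{n-k+1}\sum_{i\in T}\mu_i \;\geq\; 0,
\]
and linearity of the pairing in $\bm{\lambda}$ extends this inequality to every convex combination of vertices, i.e., to all of $P$. Hence $\bm{\lambda}\in C_k^\circ$, and Theorem~\ref{thm:dual_ck_abs_coh} concludes that $\rho$ is absolutely $k$-incoherent.

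I expect the most delicate ingredient to be the first one: one has to notice that Lemma~\ref{Speyer} can be iterated while deleting arbitrary entries, and that $k-1$ applications sharpen the hypothesis $\bm{\mu}\in C_k$ into nonnegativity of every $(n-k+1)$-fold partial sum of $\bm{\mu}$. Once that observation is in hand, the vertex analysis of the truncated simplex is an elementary active-constraint count, and the closing convex combination is immediate from the linearity of the inner product. It is also worth noting that the extremal example of Theorem~\ref{thm:k_incoh_projection} (the rank-$(n-k+1)$ state with equal nonzero eigenvalues) is exactly one of the polytope vertices, so Theorem~\ref{thm:k_incoh_projection} can be recovered as the special case where the convex combination is trivial.
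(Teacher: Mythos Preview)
Your argument is correct. Both your proof and the paper's rest on the same spectral fact---iterating Lemma~\ref{Speyer} $k-1$ times to conclude that every $(n-k+1)$-fold partial sum of a vector in $C_k$ is nonnegative---but they package the remaining convexity step differently. The paper first isolates a single extremal state $\sigma$ (the rank-$(n-k+1)$ state with equal nonzero eigenvalues, handled in Theorem~\ref{thm:k_incoh_projection}), then introduces a separate majorization lemma (Lemma~\ref{lem:k_incoh_majorize}, proved via Uhlmann's theorem) to transport absolute $k$-incoherence from $\sigma$ to any $\rho\prec\sigma$. You instead stay entirely at the level of spectra and the dual cone $C_k^\circ$: you identify the allowable spectra as the (rescaled) hypersimplex, enumerate its vertices, and verify the dual-cone inequality vertex by vertex. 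The two pictures are equivalent under the hood, since $\rho\prec\sigma$ is exactly the statement that the spectrum of $\rho$ lies in the convex hull of the permutations of the spectrum of $\sigma$, i.e., in your polytope $P$. What your route buys is self-containment within the $C_k^\circ$ framework of Theorem~\ref{thm:dual_ck_abs_coh}, with no appeal to Uhlmann; what the paper's route buys is a reusable lemma (closure of absolute $k$-incoherence under majorization) that applies to the ``absolute'' version of any convex, unitarily invariant resource theory. Your closing remark that the ``furthermore'' clause of Theorem~\ref{thm:k_incoh_projection} drops out as the degenerate vertex case is accurate and mirrors how the paper's proof of Theorem~\ref{thm:k_incoh_max_eig} leans on Theorem~\ref{thm:k_incoh_projection}.
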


Once again, we need a lemma to help us prove the above theorem. In fact, we also need to introduce some additional notation and terminology. Recall that if $\mathbf{x}, \mathbf{y} \in \mathbb{R}^n$ then we say that $\mathbf{x}$ is \textbf{majorized} by $\mathbf{y}$, denoted $\mathbf{x} \prec \mathbf{y}$, if 
\[
    \sum_{j=1}^k x_j^{\downarrow} \leq \sum_{j=1}^k y_j^{\downarrow} \quad \text{for all} \quad 1 \leq k \leq n,
\]
with equality when $k=n$, where $x_j^\downarrow$ refers to the $j$-th largest entry of $\mathbf{x}$ (i.e., $x_1^\downarrow \geq x_2^\downarrow \geq \cdots \geq x_n^\downarrow$). Given $X,Y \in M_n^{\textup{H}}$, we say that $X$ is majorized by $Y$, and we write $X \prec Y$, if the vector of eigenvalues of $X$ is majorized by the vector of eigenvalues of $Y$.

\begin{lemma}\label{lem:k_incoh_majorize}
   Suppose $\rho,\sigma \in M_n^{+}$ are such that $\rho \prec \sigma$. If $\sigma$ is absolutely $k$-incoherent, then so is $\rho$.
\end{lemma}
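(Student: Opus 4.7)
The plan is to exploit the classical characterization of majorization in terms of unitary mixing, due to Uhlmann: for Hermitian matrices $\rho, \sigma \in M_n^{\textup{H}}$, one has $\rho \prec \sigma$ if and only if $\rho$ lies in the convex hull of the unitary orbit of $\sigma$, i.e., there exist probabilities $\{p_i\}$ and unitary matrices $\{U_i\} \subseteq M_n$ such that
\[
    \rho = \sum_i p_i \, U_i \sigma U_i^*.
\]
This is the key input; everything else follows by chasing convexity and unitary invariance.

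First I would invoke this characterization to write $\rho$ in the displayed form. Then, for an \emph{arbitrary} unitary $V \in M_n$, I would observe that
\[
    V \rho V^* = \sum_i p_i \, (VU_i)\sigma(VU_i)^*.
\]
Since $\sigma$ is absolutely $k$-incoherent, each matrix $(VU_i)\sigma(VU_i)^*$ is itself in $\mathcal{I}_{k,n}$. Because $\mathcal{I}_{k,n}$ is convex (indeed, the definition as sums $\sum_j \mathbf{v_j}\mathbf{v}_{\mathbf{j}}^*$ with each $\mathbf{v_j}$ supported on at most $k$ indices is manifestly closed under convex combinations), it follows that $V\rho V^* \in \mathcal{I}_{k,n}$. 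As $V$ was arbitrary, $\rho$ is absolutely $k$-incoherent.

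The only step that is not essentially a triviality is the appeal to Uhlmann's theorem, but this is a well-known fact about majorization on Hermitian matrices and can be cited from any standard reference (e.g.\ \cite{Bha97}), so I would not anticipate any real obstacle here. If one wished to avoid the citation, an alternative route is available through Theorem~\ref{thm:dual_ck_abs_coh}: the condition in Inequality~\eqref{ineq:lam_mu} that defines absolute $k$-incoherence is, by the rearrangement inequality, equivalent to $\sum_j \lambda_j^{\downarrow} \mu_j^{\uparrow} \geq 0$ for all admissible spectra $\bm{\mu}$, and one could then verify directly that passing from $\sigma$ to $\rho \prec \sigma$ preserves this inequality (using that the ``$\mu$-weights'' $\mu_j^{\uparrow}$ can be rewritten as a nonnegative combination of indicator vectors on suffixes). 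However, the Uhlmann-based proof is cleaner and makes the convex-geometric content of the lemma most transparent.
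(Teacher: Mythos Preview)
Your proof is correct and follows essentially the same approach as the paper: both invoke Uhlmann's theorem to write $\rho$ as a convex combination of unitary conjugates of $\sigma$, and then conclude via convexity. The only cosmetic difference is that the paper phrases the last step as ``each $U_j\sigma U_j^*$ is absolutely $k$-incoherent, and the set of absolutely $k$-incoherent states is convex,'' whereas you unfold the definition one level further by introducing an arbitrary unitary $V$ and using convexity of $\mathcal{I}_{k,n}$ directly; these are equivalent.
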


\begin{proof}
    Since $\rho\prec \sigma$, we can use Uhlmann's Theorem \cite[Equation~(3)]{nielsen2001majorization} to see that there exists a probability vector $(p_1,\cdots, p_r)$ and a finite collection of unitary matrices $\{U_j\} \subset M_n$ such that 
    \[
        \rho = \sum_{j=1}^r p_jU_j\sigma U_j^*.
    \]
    Since $\sigma$ is absolutely $k$-incoherent, so is $U_j\sigma U_j^*$ for each $1 \leq j \leq r$. Since the set of absolutely $k$-incoherent states is convex, $\rho$ must be absolutely $k$-incoherent too.
\end{proof}

It is worth noting that the argument used in the proof of the above lemma actually works in the ``absolute'' version of \emph{any} convex quantum resource theory, not just that of $k$-coherence \cite{CG18}: the only properties of the set of absolutely $k$-incoherent states that we used were the facts that it is convex and invariant under unitary conjugation. In fact, Lemma~\ref{lem:k_incoh_majorize} was noted in the context of ``absolute'' versions of resource theories of bipartite sets of states in \cite[Lemma~2.2]{JLN15}, via essentially the same proof.

\begin{proof}[Proof of Theorem~\ref{thm:k_incoh_max_eig}]
    Let $\rho$ be a mixed state with maximal eigenvalue satisfying $\lambda_{\text{max}} \leq 1/(n-k+1)$, and let $\sigma$ be a mixed state of rank $n-k+1$ with all of its non-zero eigenvalues equal to $1/(n-k+1)$. We know from Theorem~\ref{thm:k_incoh_projection} that $\sigma$ is absolutely $k$-incoherent. Since $\rho$ is majorized by $\sigma$, it follows from Lemma~\ref{lem:k_incoh_majorize} that $\rho$ is absolutely $k$-incoherent.
\end{proof}

We now look at how we can simplify Theorem~\ref{thm:dual_ck_abs_coh}, and make it more explicit, for certain specific values of $k$. In particular, we will focus on how we can characterize the dual cones $C_k^\circ$ in a way that makes membership in them easy to check in the $k \leq 2$ and $k \geq n-1$ cases.

It is straightforward to show that the dual cones of $C_1$ and $C_n$ satisfy $C_1^{\circ} = \{ c\bm{1} : c \geq 0 \} \subseteq \mathbb{R}^n$, where $\bm{1} = (1,1,\ldots,1)$, and $C_n^{\circ} = C_n \subseteq \mathbb{R}^n$ is the set of vectors with non-negative entries, which recovers our earlier observations about the sets of absolutely $1$-incoherent and absolutely $n$-incoherent states being trivial. The next two subsections are devoted to the less straightforward $k = 2$ and $k = n-1$ cases.

\subsection{Absolute 2-Incoherence}\label{sec:abs_2_incoh}

The following theorem completely characterizes the dual cone of $C_k$ in the $k = 2$ case. The norm $\|\bm{\lambda}\|$ that is used in it is the standard Euclidean norm on $\mathbb{R}^n$: $\|\bm{\lambda}\| := \sqrt{\sum_{j=1}^n \lambda_j^2}$.

\begin{theorem}\label{thm:2_incoh_dual}
    Let $C_2$ be as defined in Equation~\eqref{eq:Ck_cone_defn}. Then
    \[
        C_2^{\circ} = \left\{ \bm{\lambda} \in \mathbb{R}^n : \sum_{j=1}^n \lambda_j \geq \sqrt{n-1}\|\bm{\lambda}\| \quad \text{and} \quad \lambda_j \geq 0 \ \text{for all} \ 1 \leq j \leq n \right\}.
    \]
\end{theorem}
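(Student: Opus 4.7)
My plan is to first rewrite $C_2$ in a form where second-order cone duality becomes directly applicable, and then translate back. The key identity is
\[
    S_2(\lambda_1,\ldots,\lambda_n) \;=\; \tfrac{1}{2}\Big(\big(\textstyle\sum_j\lambda_j\big)^2 - \|\bm{\lambda}\|^2\Big),
\]
so the pair of inequalities $S_1(\bm{\lambda})\geq 0$ and $S_2(\bm{\lambda})\geq 0$ collapse to the single condition $\sum_j\lambda_j \geq \|\bm{\lambda}\|$ (using $\sum_j \lambda_j \geq 0$ to drop the absolute value after taking a square root). Thus $C_2 = \{\bm{\lambda}\in\mathbb{R}^n : \sum_j\lambda_j \geq \|\bm{\lambda}\|\}$, and the claim is really about the dual of this object.

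Next I would decompose each $\bm{\lambda}\in\mathbb{R}^n$ orthogonally as $\bm{\lambda} = s\,\bm{1}/\sqrt{n} + \bm{w}$ with $\bm{w}\perp\bm{1}$, so that $\sum_j\lambda_j = \sqrt{n}\,s$ and $\|\bm{\lambda}\|^2 = s^2 + \|\bm{w}\|^2$. Squaring the condition $\sqrt{n}\,s \geq \sqrt{s^2+\|\bm{w}\|^2}$ (and remembering $s\geq 0$) turns $C_2$ into the standard Lorentz cone $\{(s,\bm{w}) : \sqrt{n-1}\,s \geq \|\bm{w}\|\}$. Since the change of basis is orthogonal, the dual cone is computed in the new coordinates and translated back; the well-known self-duality of the Lorentz cone (with the reciprocal aperture) gives
\[
    C_2^\circ \;=\; \big\{(t,\bm{v}) : t \geq \sqrt{n-1}\,\|\bm{v}\|\big\}
\]
in the rotated coordinates. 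Undoing the rotation, squaring, and rearranging yields exactly $\sum_j\lambda_j \geq \sqrt{n-1}\,\|\bm{\lambda}\|$, which is the first inequality in the theorem.

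The last thing to verify is the entrywise non-negativity claim. For this I would simply observe that each standard basis vector $\bm{e}_i$ lies in $C_2$ (we have $S_1(\bm{e}_i)=1\geq 0$ and $S_2(\bm{e}_i)=0$), and therefore every $\bm{\lambda}\in C_2^\circ$ satisfies $\lambda_i = \bm{\lambda}\cdot\bm{e}_i \geq 0$. (Alternatively one can show that the norm inequality $\sum_j\lambda_j\geq\sqrt{n-1}\|\bm{\lambda}\|$ already forces $\lambda_i\geq 0$ by taking $\lambda_i = -\varepsilon$ and spreading the remaining mass uniformly to minimise $\|\bm{\lambda}\|$; this yields a contradiction unless $\varepsilon\leq 0$.) So the non-negativity condition is either a consequence of, or dual-witnessed by, the first inequality, and the two descriptions coincide.

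I do not anticipate any serious obstacle here; once the identity $2S_2 = (\sum\lambda_j)^2 - \|\bm{\lambda}\|^2$ is noted, the rest is routine Lorentz-cone duality. The only mildly delicate point is keeping track of the scaling factor $\sqrt{n-1}$ when dualising, and making sure that squaring both sides is justified by having the correct sign; handling that carefully (via the orthogonal decomposition above) avoids any sign errors and should produce the stated characterisation directly.
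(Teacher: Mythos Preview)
Your argument is correct and essentially identical to the paper's: both rewrite $C_2$ as the second-order cone $\{\bm{\lambda}:\sum_j\lambda_j\ge\|\bm{\lambda}\|\}$ and then apply the standard circular/Lorentz-cone duality (the paper by citing a reference on circular cones, you by carrying out the orthogonal change of basis by hand). Your closing observation that the entrywise non-negativity already follows from $\sum_j\lambda_j\ge\sqrt{n-1}\,\|\bm{\lambda}\|$ is a detail the paper leaves implicit under ``rearranging and simplifying.''
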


\begin{proof}
    We can rewrite $C_2$ in the following form:
    \[
        C_2 = \big\{ \bm{\lambda} \in \mathbb{R}^n : \|\bm{\lambda}\| \leq \bm{\lambda} \cdot \bm{1} \big\} = \left\{ \bm{\lambda} \in \mathbb{R}^n : \frac{1}{\sqrt{n}}\|\bm{\lambda}\| \leq \bm{\lambda} \cdot \big(\bm{1} / \sqrt{n}\big) \right\},
    \]
    where $\bm{1} = (1,1,\ldots,1) \in \mathbb{R}^n$. It follows that $C_2$ is a circular cone \cite{ZC13} with angle $\theta$ satisfying $\cos(\theta) = 1/\sqrt{n}$ (the referenced paper deals primarily with circular cones centered around $\mathbf{e_1} := (1,0,0,\ldots,0)$, whereas ours is rotated to be centered around the unit vector $\bm{1} / \sqrt{n}$). It follows from \cite[Theorem 2.1(c)]{ZC13} that the dual of $C_2$ is the circular cone centered around $\bm{1} / \sqrt{n}$ with angle $\pi/2 - \theta$. That is,
    \[
        C_2^\circ = \Big\{ \bm{\lambda} \in \mathbb{R}^n : \cos\big(\pi/2 - \arccos(1/\sqrt{n})\big)\|\bm{\lambda}\|_2 \leq \bm{\lambda} \cdot \big(\bm{1} / \sqrt{n}\big) \Big\}.
    \]
    Using the fact that $\cos\big(\pi/2 - \arccos(1/\sqrt{n})\big) = \sqrt{1 - 1/n}$, and then rearranging and simplifying, shows that $C_2^\circ$ has the form described in the statement of the theorem.
\end{proof}

By combining Theorem~\ref{thm:2_incoh_dual} with our earlier results, we immediately get simple-to-check sufficient condition for absolute $2$-incoherence. Furthermore, this condition is both necessary and sufficient in small dimensions:

\begin{theorem}\label{thm:abs_2incoh}
    Suppose $\rho \in M_n^{+}$ is a quantum state with eigenvalues $\lambda_1$, $\lambda_2$, $\ldots$, $\lambda_n$. If
    \[
        \sum_{j=1}^n \lambda_j^2 \leq \frac{1}{n-1}
    \]
    then $\rho$ is absolutely $2$-incoherent. The converse holds if $n \leq 3$.
\end{theorem}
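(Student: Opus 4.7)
The plan is to combine Theorem~\ref{thm:dual_ck_abs_coh} (with $k = 2$) and Theorem~\ref{thm:2_incoh_dual} directly, and then appeal to the ``furthermore'' clause of Theorem~\ref{thm:dual_ck_abs_coh} to obtain the converse in low dimension. The whole argument should fit in a few lines; the conditions essentially line up once we use that $\rho$ is a density matrix.

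For the sufficiency direction, I would first note that $\rho \in M_n^+$ with $\tr(\rho) = 1$ forces $\lambda_j \geq 0$ for all $j$ and $\sum_{j=1}^n \lambda_j = 1$. Under these two constraints, the membership criterion $\bm{\lambda} \in C_2^\circ$ given by Theorem~\ref{thm:2_incoh_dual} reduces to the single scalar inequality
\[
    1 = \sum_{j=1}^n \lambda_j \geq \sqrt{n-1}\,\|\bm{\lambda}\|.
\]
Since both sides are non-negative, squaring is an equivalence, and this becomes $\sum_{j=1}^n \lambda_j^2 \leq 1/(n-1)$, which is precisely the hypothesis. Theorem~\ref{thm:dual_ck_abs_coh} (for $k = 2$) then immediately yields that $\rho$ is absolutely $2$-incoherent.

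For the converse when $n \leq 3$, the key point is that $k = 2$ falls into the privileged set $\{1, n-1, n\}$ for which Theorem~\ref{thm:dual_ck_abs_coh} is an ``if and only if''. Specifically, when $n = 3$ we have $k = n-1$, and when $n = 2$ we have $k = n$ (in which case the statement is vacuous since every density matrix is $2$-incoherent and the inequality $\sum \lambda_j^2 \leq 1$ is automatic for $\lambda_j \geq 0$ summing to $1$). So for $n \leq 3$ the forward reasoning reverses: if $\rho$ is absolutely $2$-incoherent then $\bm{\lambda} \in C_2^\circ$, which by the same squaring argument gives $\sum_j \lambda_j^2 \leq 1/(n-1)$.

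I do not anticipate a genuine obstacle: both ingredients (the explicit form of $C_2^\circ$ and the duality correspondence with absolute $k$-incoherence) are already in hand, and the trace-one and non-negativity constraints make the characterization collapse to a single Frobenius-norm inequality. The only thing to be careful about is checking that the condition ``$\lambda_j \geq 0$ for all $j$'' from Theorem~\ref{thm:2_incoh_dual} is automatic in our setting (it is, since $\rho$ is positive semidefinite) so that the characterization of $C_2^\circ$ really does reduce to the single inequality displayed in the theorem statement.
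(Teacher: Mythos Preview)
Your proposal is correct and follows essentially the same route as the paper: combine Theorem~\ref{thm:dual_ck_abs_coh} with the explicit description of $C_2^\circ$ in Theorem~\ref{thm:2_incoh_dual}, and use $\tr(\rho)=1$ to collapse the membership condition to the single Frobenius-norm inequality. For the converse in low dimension, you invoke the ``furthermore'' clause of Theorem~\ref{thm:dual_ck_abs_coh} directly (since $k=2$ equals $n$ when $n=2$ and $n-1$ when $n=3$), whereas the paper handles $n=2$ by a separate norm inequality and $n=3$ by re-tracing the proof of Theorem~\ref{thm:dual_ck_abs_coh} via Theorem~\ref{thm:sym_poly}; these amount to the same thing, your phrasing being slightly more economical.
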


\begin{proof}
    If $(\lambda_1, \lambda_2, \ldots, \lambda_n) \in C_2^\circ$ then Theorem~\ref{thm:dual_ck_abs_coh} tells us that $\rho$ is absolutely $2$-incoherent. By Theorem~\ref{thm:2_incoh_dual}, membership in $C_2^\circ$ is equivalent to
    \[
        \left(\sum_{j=1}^n \lambda_j\right)^2 \geq (n-1)\sum_{j=1}^n \lambda_j^2,
    \]
    which (when we use the fact that $\sum_{j=1}^n \lambda_j = 1$ since $\tr(\rho) = 1$) is equivalent to the inequality given in the statement of this theorem.
    
    To see that the converse holds when $n = 2$, recall that every mixed state is absolutely $2$-incoherent in this case, and for every mixed state we have
    \[
        \sum_{j=1}^n \lambda_j^2 = \|\rho\|_{\textup{F}}^2 \leq \|\rho\|_{\textup{tr}}^2 = 1 = \frac{1}{n-1},
    \]
    where $\|\rho\|_{\textup{F}}$ and $\|\rho\|_{\textup{tr}}$ are the Frobenius and trace norms of $\rho$, respectively.
    
    To see that the converse holds when $n = 3$, recall from Theorem~\ref{thm:sym_poly} that every $\bm{\mu} = (\mu_1,\mu_2,\mu_3) \in C_2$ occurs as a vector of eigenvalues of some matrix $X \in \mathcal{I}_{2,3}^\circ$. Working through the proof of Theorem~\ref{thm:dual_ck_abs_coh} backwards then shows that if $\rho$ is absolutely $2$-incoherent then Inequality~\eqref{ineq:lam_mu} holds for \emph{all} $\bm{\mu} \in C_2$, so $\bm{\lambda} \in C_2^\circ$.
\end{proof}

It is worth noting that the inequality in the above theorem is equivalent to $\|\rho\|_{\textup{F}}^2 \leq 1/(n-1)$. Alternatively, this is equivalent (again, via the fact that $\sum_{j=1}^n \lambda_j = 1$) to
\[
    \left\|\left(\frac{n-1}{n}\right)\rho - \frac{1}{n}I\right\|_{\textup{F}} \leq \frac{1}{n}.
\]
In particular, if $n = 3$ then we see that $\rho$ is absolutely $2$-incoherent if and only if $\big\|(2/3)\rho - I/3\big\|_{\textup{F}} \leq 1/3$, so the set of absolutely $2$-incoherent states is simply a ball (in the Frobenius norm) centered at the maximally mixed state $I/3$. This is in stark contrast with the absolute \emph{separability} problem, where there are indeed Frobenius balls of absolutely separable states \cite{GB02}, but there are also absolutely separable states in no such ball even in the smallest non-trivial dimensions \cite{KZ00}.

\subsection{Absolute $(n-1)$-Incoherence}\label{sec:abs_n1_incoh}

When $k = n-1$, Theorem~\ref{thm:dual_ck_abs_coh} says that $\rho$ is absolutely $k$-incoherent if and only if its vector of eigenvalues $\bm{\lambda}$ satisfies $\bm{\lambda} \in C_k^\circ$. The following theorem illustrates how to efficiently test whether or not $\bm{\lambda} \in C_k^\circ$, thus giving us an effective test for absolute $(n-1)$-incoherence.

\begin{theorem}\label{thm:abs_n_min1_incoherence}
    Suppose $\rho \in M_n^+$ is a quantum state with eigenvalues $\lambda_1 \geq \lambda_2 \geq \cdots \geq \lambda_n$. Then $\rho$ is absolutely $(n-1)$-incoherent if and only if there exists a positive semidefinite matrix $\Lambda \in M_n^{+}(\mathbb{R})$ such that
    \begin{align*}
        \lambda_1 & = -\Lambda_{1,1} - \sum_{i=2}^{n} \left(\Lambda_{1,i} + \Lambda_{i,1}\right) \quad \text{and} \quad \lambda_j = \Lambda_{j,j} \ \ \ \text{for ${} \ 2 \leq j \leq n$}.
    \end{align*}
\end{theorem}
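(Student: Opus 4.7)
The plan is to combine Theorem~\ref{thm:dual_ck_abs_coh} with a Schur complement computation and Cauchy--Schwarz. Since $k=n-1$ is one of the cases $\{1, n-1, n\}$ covered by the ``furthermore'' clause of Theorem~\ref{thm:dual_ck_abs_coh}, absolute $(n-1)$-incoherence of $\rho$ is equivalent to $\bm{\lambda}\in C_{n-1}^\circ$. I will show that both this dual-cone membership and the existence of $\Lambda$ in the statement are equivalent to the scalar inequality $\sqrt{\lambda_1} \leq \sum_{j=2}^n \sqrt{\lambda_j}$.

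On the SDP side, I will write $\Lambda$ in block form with $(1,1)$-entry $\Lambda_{1,1}$, first-row off-diagonal vector $\mathbf{b}\in\mathbb{R}^{n-1}$, and lower-right $(n-1)\times(n-1)$ block $M$ (with diagonal $(\lambda_2,\ldots,\lambda_n)$). The PSD condition on $\Lambda$ together with the linear identity $\Lambda_{1,1} = -\lambda_1 - 2\mathbf{b}^T\bm{1}$ becomes, after minimizing the resulting quadratic form in $\mathbf{b}$ via the Schur complement, the existence of a real PSD $M$ with $\mathrm{diag}(M) = (\lambda_2,\ldots,\lambda_n)$ and $\bm{1}^T M \bm{1}\geq \lambda_1$. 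The maximum of $\bm{1}^T M \bm{1}$ over all such $M$ is $\bigl(\sum_{j=2}^n\sqrt{\lambda_j}\bigr)^2$ by Cauchy--Schwarz, attained by the rank-one choice $M = \bm{v}\bm{v}^T$ with $v_j = \sqrt{\lambda_{j+1}}$; thus the SDP is feasible exactly when the scalar inequality above holds.

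On the cone side, I use that each $\bm{\mu}\in C_{n-1}$ has at most one negative entry (Theorem~\ref{thm:neg_eigs_of_kfacpos}). Parametrizing such a $\bm{\mu}$ as $(-s,\nu_1,\ldots,\nu_{n-1})$ with $\nu_j>0$ and $s\geq 0$, Newton's inequalities imply that the tightest of the constraints $e_j(\bm{\mu})\geq 0$ for $1\leq j\leq n-1$ is the $j=n-1$ one, which reads $s\leq 1/\sum_j(1/\nu_j)$. The rearrangement inequality then ensures that the worst pairing in $\bm{\lambda}\cdot\bm{\mu}$ sends $-s$ to $\lambda_1$ and pairs the sorted $\bm{\nu}$ oppositely with $(\lambda_2,\ldots,\lambda_n)$. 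Substituting $s$ at its maximum, the condition $\bm{\lambda}\cdot\bm{\mu}\geq 0$ for all such $\bm{\mu}$ becomes $\bigl[\sum_{j=1}^{n-1}\lambda_{j+1}\nu_j\bigr]\bigl[\sum_j(1/\nu_j)\bigr]\geq \lambda_1$ for all positive $\bm{\nu}$, and Cauchy--Schwarz minimizes the left-hand side at $\bigl(\sum_{j=2}^n\sqrt{\lambda_j}\bigr)^2$, yielding again the same scalar inequality. An explicit feasible $\Lambda$ is then the rank-one matrix $\bm{u}\bm{u}^T$ with $u_j=\sqrt{\lambda_j}$ for $j\geq 2$ and $u_1 = -\sum_{j\geq 2}\sqrt{\lambda_j}+\sqrt{\bigl(\sum_{j\geq 2}\sqrt{\lambda_j}\bigr)^2-\lambda_1}$.

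The main obstacle I anticipate is the treatment of degenerate cases in which some $\lambda_j$ (equivalently $\nu_j$) vanishes, so that the Schur complement becomes singular and Cauchy--Schwarz requires a limiting argument, together with the rigorous justification that $e_{n-1}\geq 0$ is the binding constraint on $s$ among the $e_j\geq 0$ inequalities; both issues ultimately reduce to careful applications of Newton's inequalities for polynomials with non-negative real roots.
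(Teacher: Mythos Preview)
Your argument is correct and takes a genuinely different route from the paper. The paper's proof simply invokes Theorem~\ref{thm:dual_ck_abs_coh} and then cites \cite[Proposition~4.2]{Z8}, which already provides $C_{n-1}^\circ = \bigcap_i (C^i)^\circ$ with each $(C^i)^\circ$ in exactly the SDP form of the statement; the only remaining work is the observation that for a decreasingly sorted $\bm{\lambda}$, membership in the intersection reduces to membership in $(C^1)^\circ$. Your approach bypasses this external reference entirely by showing that both the SDP feasibility and the condition $\bm{\lambda}\in C_{n-1}^\circ$ are each equivalent to the explicit scalar inequality $\sqrt{\lambda_1}\le\sum_{j\ge 2}\sqrt{\lambda_j}$, via a Schur complement reduction and two applications of Cauchy--Schwarz. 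This is more labor but is fully self-contained, and as a bonus it yields a clean closed-form characterization of absolute $(n-1)$-incoherence (together with an explicit rank-one feasible $\Lambda$) that the paper does not state.

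Two small points to tighten. First, the claim that every $\bm{\mu}\in C_{n-1}$ has at most one negative entry does not follow from Theorem~\ref{thm:neg_eigs_of_kfacpos} by itself, since that theorem concerns matrices in $\mathcal{I}_{n-1,n}^\circ$ rather than arbitrary vectors in $C_{n-1}$; you should either combine it with the converse clause of Theorem~\ref{thm:sym_poly} for $k=n-1$, or obtain it directly by iterating Lemma~\ref{Speyer} down to two variables (giving $\mu_i+\mu_j\ge 0$ for every pair). Second, the degenerate cases you flag are indeed harmless: when some $\lambda_j=0$ you may send the corresponding $\nu$ to infinity in the infimum (or argue by continuity on the closed cone), and when some $\nu_j=0$ the bound on $s$ collapses to $s=0$, so the inner product is automatically non-negative. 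Your Newton-inequality step---that $e_j/e_{j-1}$ is decreasing in $j$ for positive $\nu$, making $j=n-1$ binding---is correct.
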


\begin{proof}
    By Theorem~\ref{thm:dual_ck_abs_coh}, we know that $\rho$ is absolutely $(n-1)$-incoherent if and only if $\bm{\lambda} = (\lambda_1,\lambda_2,\ldots,\lambda_n) \in C_{n-1}^\circ$. This cone was characterized in \cite[Proposition~4.2]{Z8}, which says that
    \[
        C_{n-1}^\circ = \bigcap_{i=1}^n {C^{i}}^{\circ},
    \]
    where 
    \[
        {C^{i}}^{\circ} = \left\{ \mathbf{y} \in \mathbb{R}^n: ~-y_i = \sum_{k\neq i} (\Lambda_{k,i}+\Lambda_{i,k})+\Lambda_{i,i}, ~~y_j=\Lambda_{j,j} ~~\text{for}~~ j\neq i,~~\Lambda \ \text{is positive semidefinite}\right\}.
    \]
    
    Since vectors in each of the ${C^{i}}^{\circ\circ} = C^i$ cones can have at most $1$ negative entry (the $i$-th entry, as was discussed in \cite[Section~4]{Z8}), are identical up to permutation of their entries, and we have sorted the entries of $\bm{\lambda}$ so that its first entry is its largest, we conclude that $\bm{\lambda} \in \bigcap_{i=1}^n {C^{i}}^{\circ}$ is equivalent to $\bm{\lambda} \in {C^{1}}^{\circ}$. This completes the proof.
\end{proof}

The characterization of absolute $(n-1)$-incoherence that is provided by Theorem~\ref{thm:abs_n_min1_incoherence} is straightforward to check numerically via semidefinite programming. We provide MATLAB code that implements this semidefinite program, and thus checks whether or not a given quantum state is absolutely $(n-1)$-incoherent, via the CVX package \cite{CVX} at \cite{SuppCode}.

Theorem~\ref{thm:abs_n_min1_incoherence} can also be used to derive more explicit tests for absolute $(n-1)$-incoherence, at the expense of no longer being both necessary and sufficient. For example, we have the following necessary condition, which is a stronger version of the observation that no pure states can possibly be absolutely $k$-incoherent when $k \leq n-1$:

\begin{corollary}\label{cor:abs_n1_necc}
    Suppose $\rho \in M_n^+$ is a quantum state maximal eigenvalue $\lambda_{\text{max}}$. If $\rho$ is absolutely $(n-1)$-incoherent then
    \begin{align*}
        \lambda_{\text{max}} \leq 1 - \frac{1}{n}.
    \end{align*}
\end{corollary}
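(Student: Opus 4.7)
The plan is to apply Theorem~\ref{thm:abs_n_min1_incoherence} and extract a real PSD matrix $\Lambda$ whose diagonal and first-row data encode the eigenvalues of $\rho$, and then squeeze $\lambda_{\max}$ between two elementary consequences of $\Lambda \succeq 0$: (i) that $\mathbf{1}^T \Lambda \mathbf{1} \geq 0$, and (ii) that the sum of all entries of the bottom-right $(n-1)\times(n-1)$ principal block of $\Lambda$ is controlled by its trace.

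Sort the eigenvalues so that $\lambda_1 = \lambda_{\max}$. Theorem~\ref{thm:abs_n_min1_incoherence} produces $\Lambda \in M_n^+(\mathbb{R})$ with
\[
    \lambda_1 = -\Lambda_{1,1} - 2\sum_{i=2}^n \Lambda_{1,i} \quad\text{and}\quad \lambda_j = \Lambda_{j,j} \text{ for } 2\leq j \leq n,
\]
where we have used $\Lambda_{1,i} = \Lambda_{i,1}$ (symmetry of real PSD matrices). Let $\Lambda_{22}$ denote the bottom-right $(n-1)\times(n-1)$ principal submatrix of $\Lambda$ and set $S := \mathbf{1}_{n-1}^T \Lambda_{22} \mathbf{1}_{n-1}$, i.e., the sum of all entries of $\Lambda_{22}$. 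Because $\operatorname{tr}(\rho)=1$, we have $\operatorname{tr}(\Lambda_{22}) = \sum_{j=2}^n \lambda_j = 1 - \lambda_1$.

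The first ingredient is the lower bound on $S$: expanding $\mathbf{1}^T \Lambda \mathbf{1} \geq 0$ and substituting the expression for $\lambda_1$ above gives $\mathbf{1}^T \Lambda \mathbf{1} = S - \lambda_1 \geq 0$, i.e., $S \geq \lambda_1$. The second ingredient is the upper bound on $S$: since $\Lambda_{22}$ is PSD, $|\Lambda_{i,j}| \leq \sqrt{\Lambda_{i,i}\Lambda_{j,j}}$ for $i,j \geq 2$, so
\[
    S \leq \left(\sum_{i=2}^n \sqrt{\Lambda_{i,i}}\right)^2 \leq (n-1) \sum_{i=2}^n \Lambda_{i,i} = (n-1)(1-\lambda_1),
\]
by Cauchy--Schwarz. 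Chaining the two bounds yields $\lambda_1 \leq (n-1)(1 - \lambda_1)$, which rearranges to $\lambda_1 \leq 1 - 1/n$, as desired.

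There isn't really a hard step here: the whole corollary is already implicit in the characterization of $C_{n-1}^\circ$ from Theorem~\ref{thm:abs_n_min1_incoherence}, and all that is needed is to pair the trivial positivity $\mathbf{1}^T \Lambda \mathbf{1} \geq 0$ with the standard PSD bound $\mathbf{1}^T M \mathbf{1} \leq (\dim M)\operatorname{tr}(M)$. The only mild care required is keeping the indexing straight (the asymmetric role of the first row/column in the theorem statement forces $\lambda_{\max}$ into position~1, which is exactly what we want).
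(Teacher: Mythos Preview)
Your proof is correct, and it takes a somewhat different route from the paper's. Both arguments invoke Theorem~\ref{thm:abs_n_min1_incoherence} to extract the real PSD matrix $\Lambda$ and both land at the same intermediate inequality $\lambda_1 \leq (n-1)(1-\lambda_1)$, but they get there differently. The paper uses the $2\times 2$ principal submatrices of $\Lambda$ that involve the \emph{first} row and column, giving $|\Lambda_{1,j}|^2 \leq \Lambda_{1,1}\Lambda_{j,j}$; after summing over $j$ this leaves a quadratic expression in the variables $\Lambda_{1,2},\ldots,\Lambda_{1,n}$, which is then maximized by an explicit calculus argument. You instead combine the global condition $\mathbf{1}^T\Lambda\mathbf{1}\geq 0$ (which packages the first row/column information into the single identity $S \geq \lambda_1$) with the standard bound $\mathbf{1}^T\Lambda_{22}\mathbf{1} \leq (n-1)\operatorname{tr}(\Lambda_{22})$ on the bottom-right block. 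Your approach is cleaner in that it avoids the optimization step entirely; the paper's approach, on the other hand, makes it slightly more transparent which choice of $\Lambda$ saturates the bound (namely $\Lambda_{1,j} = -(1-\lambda_1)$), which feeds directly into Example~\ref{exam:n1_incoh_ness}.
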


\begin{proof}[Proof of Corollary~\ref{cor:abs_n1_necc}.]
    Let $\lambda_1 \geq \lambda_2 \geq \cdots \geq \lambda_n \geq 0$ denote the eigenvalues of $\rho$ (so that, in particular, $\lambda_1 = \lambda_{\text{max}}$). If $\rho$ is absolutely $(n-1)$-incoherent then a matrix $\Lambda$ as described by Theorem~\ref{thm:abs_n_min1_incoherence} exists. Since $\Lambda$ is positive semidefinite, so are its $2\times 2$ principal submatrices, so we must have
    \[
        |\Lambda_{1,j}|^2 \leq \Lambda_{1,1}\Lambda_{j,j} = \lambda_j\left(-\lambda_1 - \sum_{i=2}^{n} \left(\Lambda_{1,i} + \Lambda_{i,1}\right)\right)
    \]
    for all $2 \leq j \leq n$. If we move $\lambda_1$ to the left-hand side and then sum over $j$, this tells us that
    \[
        \lambda_1\sum_{j=2}^{n}\lambda_j \leq -\sum_{j=2}^n \left(|\Lambda_{1,j}|^2 + \lambda_j \sum_{i=2}^{n} \left(\Lambda_{1,i} + \Lambda_{i,1}\right)\right) = -\sum_{i=2}^{n} \left(|\Lambda_{1,i}|^2 + \left(\sum_{j=2}^{n}\lambda_j\right)(\Lambda_{1,i} + \Lambda_{i,1})\right).
    \]
    Standard calculus-based optimization techniques show that the quantity on the right (if regarded as a function in the $n-1$ variables $\Lambda_{1,i}$, for $2 \leq i \leq n$) is maximized exactly when $\Lambda_{1,2} = \cdots = \Lambda_{1,n} = -\sum_{j=2}^{n}\lambda_j$, and its maximal value is $(n-1)\left(\sum_{j=2}^{n}\lambda_j\right)^2$. Simplifying then shows that
    \begin{align}\label{eq:lam1_sum2_ineq}
        \lambda_1 \leq (n-1)\sum_{j=2}^{n}\lambda_j.
    \end{align}
    
    Finally, if we use the fact that $\tr(\rho) = \sum_{j=1}^n\lambda_j = 1$, then we see that $\sum_{j=2}^n\lambda_j = 1-\lambda_1$. Plugging this equation into Inequality~\ref{eq:lam1_sum2_ineq} gives us the inequality in the statement of the theorem.
\end{proof}

Unlike Theorem~\ref{thm:abs_n_min1_incoherence}, Corollary~\ref{cor:abs_n1_necc} really is just a necessary condition for absolute $(n-1)$-incoherence. This is illustrated by the state $\rho = \mathrm{diag}(2/3,1/3,0) \in M_3^{+}$, which satisfies the hypotheses of Corollary~\ref{cor:abs_n1_necc}, but is not absolutely $(n-1) = 2$-incoherent (as can be seen from Theorem~\ref{thm:abs_2incoh}). However, Corollary~\ref{cor:abs_n1_necc} is still tight, in the sense that it is the best possible inequality depending only on $\lambda_1$ and $\tr(\rho)$:

\begin{example}\label{exam:n1_incoh_ness}
    Suppose $X \in M_n^{+}$ has eigenvalues $c \geq 1$ with multiplicity $1$, and $1$ with multiplicity $n-1$. We claim that the quantum state $\rho = X/\tr(X)$ is absolutely $(n-1)$-incoherent if and only if $c \leq (n-1)^2$.
    
    To see that $c = (n-1)^2$ implies absolute $(n-1)$-incoherence, choose the matrix $\Lambda$ in Theorem~\ref{thm:abs_n_min1_incoherence} to be $\Lambda = DJD^*/\tr(X)$, where $J$ is the all-ones matrix (which is positive semidefinite, so $\Lambda$ is too) and $D = \mathrm{diag}(1-n,1,1,\ldots,1)$. Then $\lambda_j = \Lambda_{j,j} = 1/\tr(X)$ for $2 \leq j \leq n$ and
    \[
        \lambda_1 = -\Lambda_{1,1} - \sum_{i=2}^{n} \left(\Lambda_{1,i} + \Lambda_{i,1}\right) = \frac{1}{\tr(X)}\big(-(n-1)^2 + 2(n-1)^2\big) = \frac{c}{\tr(X)},
    \]
    so Theorem~\ref{thm:abs_n_min1_incoherence} tells us that $\rho$ is absolutely $(n-1)$-incoherent. The fact that $1 \leq c < (n-1)^2$ implies absolute $(n-1)$-incoherence then follows from convexity.
    
    Conversely, if $\rho$ is absolutely $(n-1)$-incoherent then Corollary~\ref{cor:abs_n1_necc} applies, so we must have $c/\tr(X) = \lambda_1 \leq (n-1)\sum_{j=2}^n\lambda_j/\tr(X) = (n-1)^2/\tr(X)$, so $c \leq (n-1)^2$.
\end{example}

We already saw an explicit \emph{sufficient} (rather than necessary) condition for absolute $(n-1)$-incoherence back in Theorem~\ref{thm:k_incoh_max_eig}: plugging $k = n-1$ into that theorem tells us that if the maximal eigenvalue $\lambda_{\text{max}}$ of a quantum state $\rho \in M_n^{+}$ satisfies
\[
    \lambda_{\text{max}} \leq \frac{1}{2}
\]
then $\rho$ must be absolutely $(n-1)$-incoherent. We summarize these observations in Figure~\ref{fig:max_eigenvalue_abs_n1}.

\begin{figure}[!htb]
    \begin{center}
    	\begin{tikzpicture}[xscale=14.5]
    		\draw[thick] (0,0) -- (1,0);
    				
    		\draw[thick] (0,0.3) -- (0,-0.3);
    		\draw[thick] (0.1666666,0.2) -- (0.1666666,-0.2);
    		\draw[thick] (0.5,0.2) -- (0.5,-0.2);
    		\draw[thick] (0.8333333,0.2) -- (0.8333333,-0.2);
    		\draw[thick] (1,0.3) -- (1,-0.3);
    		
    		\draw[thick] (-0.02,0) node[anchor=east] {$\lambda_{\text{max}}$:};
    		
    		\draw[thick] (0,-0.22) node[anchor=north] {$0$};
    		\draw[thick] (0.1666666,-0.22) node[anchor=north] {$\frac{1}{n}$};
    		\draw[thick] (0.5,-0.22) node[anchor=north] {$\frac{1}{2}$};
    		\draw[thick] (0.8333333,-0.22) node[anchor=north] {$1-\frac{1}{n}$};
    		\draw[thick] (1,-0.22) node[anchor=north] {$1$};
    		
    		\draw[thick] (0.083333,0.09) node[anchor=south] {\footnotesize impossible};
    		\draw[thick] (0.333333,0.09) node[anchor=south,align=center] {\footnotesize Theorem~\ref{thm:k_incoh_max_eig}:\\\footnotesize abs. $(n-1)$-incoherent};
    		\draw[thick] (0.666666,0.09) node[anchor=south,align=center] {\footnotesize Example~\ref{exam:n1_incoh_ness}:\\\footnotesize maybe abs. $(n-1)$-incoherent};
    		\draw[thick] (0.916666,0.09) node[anchor=south,align=center] {\footnotesize Corollary~\ref{cor:abs_n1_necc}:\\\footnotesize not abs. \\\footnotesize $(n-1)$-incoh.};
    		
            \draw[<-] (0.166666,0.35) to (0.166666,1.5) node[anchor=south]{\small maximally mixed state};
    	\end{tikzpicture}
	\end{center}
	\caption{How the maximal eigenvalue $\lambda_{\text{max}}$ of a quantum state $\rho \in M_n^{+}$ relates to its absolute $(n-1)$-incoherence.}\label{fig:max_eigenvalue_abs_n1}
\end{figure}
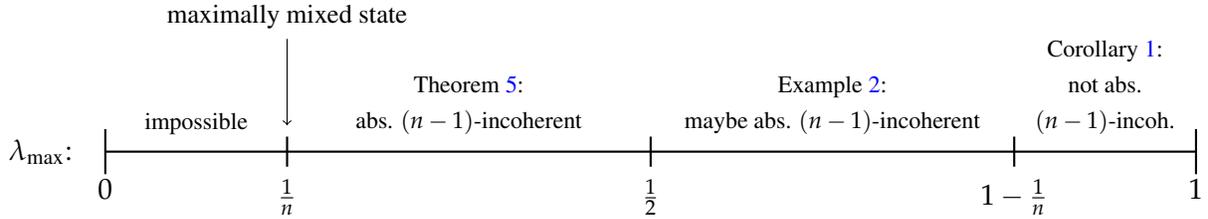

It is clear that any state $\rho \in M_n^+$ that is $k$-incoherent will still be $k$-incoherent if we regard it as a member of $M_{n+1}^+$ by adding rows and columns of zeroes. The following result shows that, perhaps surprisingly, an even stronger statement is true of \emph{absolute} $k$-incoherence, at least when $k = n-1$:

\begin{corollary}\label{cor:abs_n1_incoh_inc_size}
    A quantum state $\rho \in M_n^+$ is absolutely $(n-1)$-incoherent if and only if the state $\widetilde{\rho} \in M_{n+1}^+$, obtained by appending a row and column of zeros, is absolutely $n$-incoherent.
\end{corollary}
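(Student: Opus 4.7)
The plan is to apply Theorem~\ref{thm:abs_n_min1_incoherence} to both states and establish a bijection between the feasible PSD matrices $\Lambda$ arising in the two cases. Let $\rho \in M_n^+$ have eigenvalues $\lambda_1 \geq \lambda_2 \geq \cdots \geq \lambda_n \geq 0$, so that $\widetilde{\rho} \in M_{n+1}^+$ has eigenvalues $\lambda_1 \geq \lambda_2 \geq \cdots \geq \lambda_n \geq \lambda_{n+1} = 0$ (already in decreasing order since $\lambda_n \geq 0$). By Theorem~\ref{thm:abs_n_min1_incoherence}, $\rho$ is absolutely $(n-1)$-incoherent if and only if there exists a PSD matrix $\Lambda \in M_n^+(\mathbb{R})$ with $\Lambda_{j,j} = \lambda_j$ for $2 \leq j \leq n$ and $\lambda_1 = -\Lambda_{1,1} - \sum_{i=2}^{n}(\Lambda_{1,i}+\Lambda_{i,1})$, while $\widetilde{\rho}$ is absolutely $n$-incoherent if and only if there exists a PSD matrix $\widetilde{\Lambda} \in M_{n+1}^+(\mathbb{R})$ with $\widetilde{\Lambda}_{j,j} = \lambda_j$ for $2 \leq j \leq n+1$ and $\lambda_1 = -\widetilde{\Lambda}_{1,1} - \sum_{i=2}^{n+1}(\widetilde{\Lambda}_{1,i}+\widetilde{\Lambda}_{i,1})$.

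For the forward direction, given $\Lambda$ as above, I would define $\widetilde{\Lambda} \in M_{n+1}^+(\mathbb{R})$ to be the block-diagonal matrix obtained by appending a row and column of zeros to $\Lambda$. This matrix is PSD, has $\widetilde{\Lambda}_{n+1,n+1} = 0 = \lambda_{n+1}$ and $\widetilde{\Lambda}_{1,n+1} = \widetilde{\Lambda}_{n+1,1} = 0$, so the sum in the condition on $\lambda_1$ over $2 \leq i \leq n+1$ collapses to the sum over $2 \leq i \leq n$, matching the condition for $\rho$.

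For the converse direction, the key observation is that if $\widetilde{\Lambda}$ witnesses the absolute $n$-incoherence of $\widetilde{\rho}$, then $\widetilde{\Lambda}_{n+1,n+1} = \lambda_{n+1} = 0$, and positive semidefiniteness of $\widetilde{\Lambda}$ (specifically, non-negativity of each $2 \times 2$ principal submatrix $\left[\begin{smallmatrix}\widetilde{\Lambda}_{i,i} & \widetilde{\Lambda}_{i,n+1}\\ \widetilde{\Lambda}_{n+1,i} & 0\end{smallmatrix}\right]$) forces $\widetilde{\Lambda}_{i,n+1} = \widetilde{\Lambda}_{n+1,i} = 0$ for all $i$. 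Hence the last row and column of $\widetilde{\Lambda}$ vanish, and taking $\Lambda$ to be the top-left $n \times n$ block of $\widetilde{\Lambda}$ gives a PSD matrix that satisfies the conditions of Theorem~\ref{thm:abs_n_min1_incoherence} for $\rho$.

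The argument is essentially bookkeeping once one has Theorem~\ref{thm:abs_n_min1_incoherence} in hand; the only substantive step is the standard fact that a zero diagonal entry in a PSD matrix forces the corresponding row and column to be zero, which takes care of the converse direction cleanly. No separate treatment of edge cases is needed since the proof works uniformly for all $n \geq 2$.
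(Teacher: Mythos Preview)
Your proof is correct and follows essentially the same route as the paper's own proof: both directions use Theorem~\ref{thm:abs_n_min1_incoherence}, appending a zero row and column to $\Lambda$ for the forward direction, and observing that $\widetilde{\Lambda}_{n+1,n+1}=0$ forces the last row and column of $\widetilde{\Lambda}$ to vanish for the converse. If anything, you are slightly more explicit than the paper in justifying the latter step via the $2\times 2$ principal submatrices.
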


\begin{proof}
    This follows immediately from Theorem~\ref{thm:abs_n_min1_incoherence}: if $\rho$ is absolutely $(n-1)$-incoherent then there exists a positive semidefinite matrix $\Lambda \in M_n^+$ satisfying the constraints described by that theorem. Simply appending a row and column of zeros to $\Lambda$ to obtain a new matrix $\widetilde{\Lambda} \in M_{n+1}^+$ shows, again via Theorem~\ref{thm:abs_n_min1_incoherence}, that $\widetilde{\rho}$ is absolutely $n$-incoherent.
    
    Conversely, if $\widetilde{\rho} \in M_{n+1}^+$ is absolutely $n$-incoherent and has its last row and column consisting entirely of zeros, then the matrix $\widetilde{\Lambda} \in M_{n+1}^+$ described by Theorem~\ref{thm:abs_n_min1_incoherence} must have $\Lambda_{n+1,n+1} = 0$ and thus its last row and column must consist entirely of zeros. Erasing that final row and column then gives a matrix $\Lambda \in M_n^{+}$ that shows, again via Theorem~\ref{thm:abs_n_min1_incoherence}, that $\rho$ is absolutely $(n-1)$-incoherent.
\end{proof}

\section{Conclusions and Open Questions}\label{sec:conclusions}

In this work, we explored spectral problems of interest in the quantum resource theory of $k$-coherence. In particular, we introduced $k$-locally PSD matrices as witnesses of $k$-coherence, and we derived bounds on the spectra of $k$-locally PSD matrices. We also introduced the set of \emph{absolutely} $k$-incoherent quantum states, and used our spectral bounds for $k$-locally PSD matrices to produce bounds on the set of $k$-incoherent quantum states.

While we have solved numerous problems, our work leaves open many questions and avenues for future research:

\begin{itemize}
    \item The computational method of Section~\ref{sec:computational_method_factor_pos} for constructing $k$-locally PSD matrices with a given spectrum seems to work well in practice. Are there conditions under which it is \emph{guaranteed} to work (i.e., the iteration is guaranteed to converge to a $k$-locally PSD matrix with the desired spectrum)?
    
    \item We established numerous results about the spectrum of $k$-locally PSD matrices and absolute $k$-incoherence. Since there is a connection between $k$-incoherence and $k$-entanglement \cite{SS15}, it seems natural to ask whether or not any of our results can be extended in a useful way to the spectrum of $k$-entanglement witnesses and absolute $k$-entanglement.
    
    \item Theorem~\ref{thm:abs_2incoh} gives a sufficient condition for absolute $2$-incoherence that is also necessary in dimension $n = 3$. It is unclear whether or not it is necessary in dimension $n = 4$ or higher. Similarly, we expect that the ``conversely'' statements of Theorems~\ref{thm:sym_poly} and~\ref{thm:dual_ck_abs_coh} are false when $n$ and/or $k$ are large enough, but we do not have any explicit examples to demonstrate this.
    
    \item Theorem~\ref{thm:abs_n_min1_incoherence} shows that absolute $(n-1)$-incoherence can be determined in polynomial time. Is the same true of absolute $k$-incoherence for all $1 \leq k \leq n$? It is worth noting that the set of ``absolutely PPT states'' \cite{Hil07} (which we do not define here) has a similar semidefinite representation, but membership in that set is not known to be checkable in polynomial time since the number of matrix inequalities that need to be checked grows exponentially.
    
    \item Does Corollary~\ref{cor:abs_n1_incoh_inc_size} hold for $k$-incoherence when $k \neq n-1$? That is, is it true that absolute $k$-incoherence of $\rho \in M_n^+$ is equivalent to absolute $(k+1)$-incoherence of $\widetilde{\rho} \in M_{n+1}^{+}$?
\end{itemize}

\noindent \textbf{Acknowledgements.} The authors thank David E.~Speyer for providing the proof of Lemma~\ref{Speyer} \cite{SpeyMO}.  We would also like to thank the authors of \cite{blekherman2022hyperbolic} for bringing our attention to this paper.  N.J.\ was supported by NSERC Discovery Grants RGPIN-2016-04003 and RGPIN-2022-04098. S.M.\ acknowledges hospitality from Mount Allison University during her postdoctoral fellowship. R.P.\ was supported by NSERC Discovery Grant number 400550.  S.P.~was supported by NSERC Discovery Grant number 1174582, the Canada Foundation for Innovation (CFI) grant number 35711, and the Canada Research Chairs (CRC) Program grant number 231250.

\bibliographystyle{alpha}
\bibliography{references}

\newcommand{\etalchar}[1]{$^{#1}$}
\begin{thebibliography}{SdMW{\etalchar{+}}11}

\bibitem[AJR15]{AJR15}
S.~Arunachalam, N.~Johnston, and V.~Russo.
\newblock Is absolute separability determined by the partial transpose?
\newblock {\em Quantum Information \& Computation}, 15:694--720, 2015.

\bibitem[BCPT05]{boman2005factor}
E.~G. Boman, D.~Chen, O.~Parekh, and S.~Toledo.
\newblock On factor width and symmetric {H}-matrices.
\newblock {\em Linear Algebra and its Applications}, 405:239--248, 2005.

\bibitem[BDSS22]{blekherman2022hyperbolic}
Grigoriy Blekherman, Santanu~S Dey, Kevin Shu, and Shengding Sun.
\newblock Hyperbolic relaxation of k-locally positive semidefinite matrices.
\newblock {\em SIAM Journal on Optimization}, 32(2):470--490, 2022.

\bibitem[Bha97]{Bha97}
R.~Bhatia.
\newblock {\em Matrix Analysis}.
\newblock Springer, 1997.

\bibitem[Br{\"a}14]{Bra12}
P.~Br{\"a}nd{\'e}n.
\newblock Hyperbolicity cones of elementary symmetric polynomials are
  spectrahedral.
\newblock {\em Optimization Letters}, 8(5):1773--1782, 2014.

\bibitem[BV04]{BV04}
S.~Boyd and L.~Vandenberghe.
\newblock {\em Convex optimization}.
\newblock Cambridge University Press, 2004.

\bibitem[CG05]{CG05}
M.~T. Chu and G.~H. Golub.
\newblock {\em Inverse Eigenvalue Problems: Theory, Algorithms, and
  Applications}.
\newblock Oxford University Press, 2005.

\bibitem[CG19]{CG18}
E.~Chitambar and G.~Gour.
\newblock Quantum resource theories.
\newblock {\em Reviews of Modern Physics}, 91:025001, 2019.

\bibitem[CJMP22]{CJMP21}
G.~Champagne, N.~Johnston, M.~MacDonald, and L.~Pipes.
\newblock Spectral properties of symmetric quantum states and symmetric
  entanglement witnesses.
\newblock {\em Linear Algebra and its Applications}, 649:273--300, 2022.

\bibitem[CMW08]{CW08}
T.~S. Cubitt, A.~Montanaro, and A.~Winter.
\newblock On the dimension of subspaces with bounded {S}chmidt rank.
\newblock {\em Journal of Mathematical Physics}, 49:022107, 2008.

\bibitem[EAO{\etalchar{+}}02]{EAO02}
A.~K. Ekert, C.~M. Alves, D.~K.~L. Oi, M.~Horodecki, P.~Horodecki, and L.~C.
  Kwek.
\newblock Direct estimations of linear and nonlinear functionals of a quantum
  state.
\newblock {\em Physical Review Letters}, 88:217901, 2002.

\bibitem[Gar59]{Garding}
L.~Garding.
\newblock An inequality for hyperbolic polynomials.
\newblock {\em Journal of Mathematics and Mechanics}, 8(6):957--965, 1959.

\bibitem[GB02]{GB02}
L.~Gurvits and H.~Barnum.
\newblock Largest separable balls around the maximally mixed bipartite quantum
  state.
\newblock {\em Physical Review A}, 66:062311, 2002.

\bibitem[GB14]{CVX}
M.~Grant and S.~Boyd.
\newblock {CVX}: {MATLAB} software for disciplined convex programming, version
  2.1.
\newblock \url{http://cvxr.com/cvx}, March 2014.

\bibitem[Hil07]{Hil07}
R.~Hildebrand.
\newblock Positive partial transpose from spectra.
\newblock {\em Physical Review A}, 76:052325, 2007.

\bibitem[HJ13]{HJ13}
R.~A. Horn and C.~R. Johnson.
\newblock {\em Matrix Analysis (2nd edition)}.
\newblock Cambridge University Press, 2013.

\bibitem[HW53]{hoffman1953}
A.~J. Hoffman and H.~W. Wielandt.
\newblock The variation of the spectrum of a normal matrix.
\newblock {\em Duke Mathematical Journal}, 20:37--39, 1953.

\bibitem[JLNR15]{JLN15}
M.~A. Jivulescu, N.~Lupa, I.~Nechita, and D.~Reeb.
\newblock Positive reduction from spectra.
\newblock {\em Linear Algebra and its Applications}, 469:276--304, 2015.

\bibitem[Joh13]{Joh13}
N.~Johnston.
\newblock Separability from spectrum for qubit--qudit states.
\newblock {\em Physical Review A}, 88:062330, 2013.

\bibitem[Joh22]{SuppCode}
N.~Johnston.
\newblock {MATLAB} code for k-factor positivity and absolute k-incoherence.
\newblock \url{http://www.njohnston.ca/publications/absolute-k-incoherence/}.
  Also available in the ``source'' files for the arXiv version of this paper,
  2022.

\bibitem[KT16]{kushel2016circulants}
O.~Kushel and M.~Tyaglov.
\newblock Circulants and critical points of polynomials.
\newblock {\em Journal of Mathematical Analysis and Applications},
  439(2):634--650, 2016.

\bibitem[K{\.Z}01]{KZ00}
M.~Ku{\'s} and K.~{\.Z}yczkowski.
\newblock Geometry of entangled states.
\newblock {\em Physical Review A}, 63:032307, 2001.

\bibitem[LBT19]{4}
Z.W. Liu, K.~Bu, and R.~Takagi.
\newblock One-shot operational quantum resource theory.
\newblock {\em Physical review letters}, 123(2):020401, 2019.

\bibitem[LLC{\etalchar{+}}12]{7}
C.M. Li, N.~Lambert, Y.N. Chen, G.Y. Chen, and F.~Nori.
\newblock Witnessing quantum coherence: from solid-state to biological systems.
\newblock {\em Scientific Reports}, 2:885, 2012.

\bibitem[LM14]{10}
F.~Levi and F.~Mintert.
\newblock A quantitative theory of coherent delocalization.
\newblock {\em New Journal of Physics}, 16(3):033007, 2014.

\bibitem[LPS12]{LPN12}
C.-K. Li, Y.~Poon, and N.-S. Sze.
\newblock Generalized interlacing inequalities.
\newblock {\em Linear and Multilinear Algebra}, 60:1245 -- 1254, 2012.

\bibitem[LSLL21]{9}
J.W. Liu, S.Q. Shen, M.~Li, and L.~Li.
\newblock Lower bounds for the robustness of multilevel coherence.
\newblock {\em International Journal of Theoretical Physics}, 60(5):1712--1719,
  2021.

\bibitem[NBC{\etalchar{+}}16]{NBCPJA16}
C.~Napoli, T.~R. Bromley, M.~Cianciaruso, M.~Piani, N.~Johnston, and G.~Adesso.
\newblock Robustness of coherence: An operational and observable measure of
  quantum coherence.
\newblock {\em Physical Review Letters}, 116:150502, 2016.

\bibitem[NC00]{NC00}
M.~A. Nielsen and I.~L. Chuang.
\newblock {\em Quantum computation and quantum information}.
\newblock Cambridge University Press, 2000.

\bibitem[NV01]{nielsen2001majorization}
M.~A. Nielsen and G.~Vidal.
\newblock Majorization and the interconversion of bipartite states.
\newblock {\em Quantum Information and Computation}, 1(1):76--93, 2001.

\bibitem[PW09]{PW09}
M.~Piani and J.~Watrous.
\newblock All entangled states are useful for channel discrimination.
\newblock {\em Physical Review Letters}, 102:250501, 2009.

\bibitem[RBC{\etalchar{+}}18]{RBC18}
M.~Ringbauer, T.~R. Bromley, M.~Cianciaruso, L.~Lami, W.~Y..S. Lau, G.~Adesso,
  A.~G. White, A.~Fedrizzi, and M.~Piani.
\newblock Certification and quantification of multilevel quantum coherence.
\newblock {\em Physical Review X}, 8:041007, 2018.

\bibitem[Ren06]{R06}
J.~Renegar.
\newblock Hyperbolic programs, and their derivative relaxations.
\newblock {\em Foundations of Computational Mathematics}, 6:59--79, 2006.

\bibitem[SdMW{\etalchar{+}}11]{5}
T.~Scholak, F.~de~Melo, T.~Wellens, F.~Mintert, and Andreas A.~Buchleitner.
\newblock Efficient and coherent excitation transfer across disordered
  molecular networks.
\newblock {\em Phys. Rev. E}, 83:021912, Feb 2011.

\bibitem[SEM21]{SM21}
E.~Serrano-Ensástiga and J.~Martin.
\newblock Maximally entangled mixed symmetric states of two qubits.
\newblock E-print: \href{https://arxiv.org/abs/2112.05102}{arXiv:2112.05102}
  [quant-ph], 2021.

\bibitem[SFC{\etalchar{+}}17]{6}
G.D. Scholes, G.R. Fleming, L.X. Chen, A.~Aspuru-Guzik, A.~Buchleitner, D.F.
  Coker, G.S. Engel, R.~Van Grondelle, A.~Ishizaki, D.M. Jonas, et~al.
\newblock Using coherence to enhance function in chemical and biophysical
  systems.
\newblock {\em Nature}, 543(7647):647--656, 2017.

\bibitem[SP15]{SP15}
J.~Saunderson and P.~A. Parrilo.
\newblock Polynomial-sized semidefinite representations of derivative
  relaxations of spectrahedral cones.
\newblock {\em Mathematical Programming}, 153:309--331, 2015.

\bibitem[Spe22]{SpeyMO}
D.~E. Speyer.
\newblock Dimension reduction for non-negativity of elementary symmetric
  polynomials.
\newblock MathOverflow, 2022.
\newblock URL: \url{https://mathoverflow.net/q/417597}.

\bibitem[SSD{\etalchar{+}}15]{SS15}
A.~Streltsov, U.~Singh, H.~S. Dhar, M.~N. Bera, and G.~Adesso.
\newblock Measuring quantum coherence with entanglement.
\newblock {\em Physical Review Letters}, 115:020403, 2015.

\bibitem[Ter00]{Ter00}
B.~M. Terhal.
\newblock Bell inequalities and separability criterion.
\newblock {\em Physics Letters A}, 271:319--326, 2000.

\bibitem[TOK{\etalchar{+}}14]{TOKKNN13}
T.~Tanaka, Y.~Ota, M.~Kanazawa, G.~Kimura, H.~Nakazato, and F.~Nori.
\newblock Determining eigenvalues of a density matrix with minimal information
  in a single experimental setting.
\newblock {\em Physical Review A}, 89:012117, 2014.

\bibitem[TSS21]{teo2021modern}
Y.S. Teo and L.L. S{\'a}nchez-Soto.
\newblock Modern compressive tomography for quantum information science.
\newblock {\em International Journal of Quantum Information}, 19(08):2140003,
  2021.

\bibitem[VAM01]{VAD01}
F.~Verstraete, K.~Audenaert, and B.~De Moor.
\newblock Maximally entangled mixed states of two qubits.
\newblock {\em Physical Review A}, 64:012316, 2001.

\bibitem[Vid00]{Vid00}
G.~Vidal.
\newblock Entanglement monotones.
\newblock {\em Journal of Modern Optics}, 47:355--376, 2000.

\bibitem[vPRM15]{8}
K.~von Prillwitz, {\L}.~Rudnicki, and F.~Mintert.
\newblock Contrast in multipath interference and quantum coherence.
\newblock {\em Physical Review A}, 92:052114, Nov 2015.

\bibitem[Wat18]{Wat18}
J.~Watrous.
\newblock {\em The Theory of Quantum Information}.
\newblock Cambridge University Press, 2018.

\bibitem[WSR{\etalchar{+}}21]{WSR21}
Kang-Da Wu, Alexander Streltsov, Bartosz Regula, Guo-Yong Xiang, Chuan-Feng Li,
  and Guang-Can Guo.
\newblock Experimental progress on quantum coherence: Detection,
  quantification, and manipulation.
\newblock {\em Advanced Quantum Technologies}, 4(9):2100040, 2021.

\bibitem[YG19]{YG19}
X.-D. Yu and O.~G{\"u}hne.
\newblock Detecting coherence via spectrum estimation.
\newblock {\em Physical Review A}, 99:062310, 2019.

\bibitem[ZC13]{ZC13}
J.~Zhou and J.~S. Chen.
\newblock Properties of circular cone and spectral factorization associated
  with circular cone.
\newblock {\em Journal of Nonlinear and Convex Analysis}, 14:807--816, 2013.

\bibitem[ZGY21]{11}
L.~M. Zhang, T.~Gao, and F.~L. Yan.
\newblock Transformations of multilevel coherent states under
  coherence-preserving operations.
\newblock {\em Science China Physics, Mechanics \& Astronomy}, 64(6):1--6,
  2021.

\bibitem[Zin08]{Z8}
Y.~Zinchenko.
\newblock On hyperbolicity cones associated with elementary symmetric
  polynomials.
\newblock {\em Optimization Letters}, 2:389--402, 2008.

\end{thebibliography}
\end{document}